\DeclareMathOperator*{\argmin}{arg\,min}
\newcommand{\be}{\begin{equation}}
	\newcommand{\ee}{\end{equation}}
\newcommand{\bse}{\begin{subequations}}
	\newcommand{\ese}{\end{subequations}}
\newcommand{\bewn}{\begin{equation*}}
	\newcommand{\eewn}{\end{equation*}}
\newcommand{\bbmat}{\begin{bmatrix}} 
	\newcommand{\ebmat}{\end{bmatrix}}
\newcommand{\bd}{\begin{displaymath}}
	\newcommand{\ed}{\end{displaymath}}
\newcommand{\bea}{\begin{eqnarray}}
	\newcommand{\eea}{\end{eqnarray}}
\newcommand{\ba}{\begin{array}}
	\newcommand{\ea}{\end{array}}
\newcommand{\baa}{\begin{array}{ll}}
	\newcommand{\eaa}{\end{array}}
\newcommand{\ds}{\displaystyle}
\newcommand{\bc}{\begin{center}}
	\newcommand{\ec}{\end{center}}
\newcommand{\ben}{\begin{enumerate}}
	\newcommand{\een}{\end{enumerate}}
\newcommand{\bi}{\begin{itemize}}
	\newcommand{\ei}{\end{itemize}}
\newcommand{\bt}{\begin{tabular}}
	\newcommand{\et}{\end{tabular}}
\newcommand{\bte}{\begin{table}}
	\newcommand{\ete}{\end{table}}
\newcommand{\bal}{\begin{align}}
	\newcommand{\eal}{\end{align}}
\newcommand{\norm}[1]{\left\lVert#1\right\rVert}   
\renewcommand\paragraph{\@startsection{paragraph}{4}{\z@}%
	{-2.5ex\@plus -1ex \@minus -.25ex}%
	{1.25ex \@plus .25ex}%
	{\normalfont\normalsize\bfseries}}
\newtheorem{remark}{Remark}
\newtheorem{condition}{Condition}
\newtheorem{theorem}{Theorem}
\newtheorem{lemma}[theorem]{\textbf{Lemma}}
\newtheorem{assumption}{\textbf{Assumption}}
\newtheorem{problem}{\textbf{Problem}}
\newtheorem{definition}{\textbf{Definition}}
\newcommand{\bR}{\mathbb{R}}
\newcommand{\calA}{\mathcal{A}}
\newcommand{\calB}{\mathcal{B}}
\newcommand{\calD}{\mathcal{D}}
\newcommand{\calF}{\mathcal{F}}
\newcommand{\calP}{\mathcal{P}}
\newcommand{\calR}{\mathcal{R}}
\newcommand{\calW}{\mathcal{W}}
\newcommand{\calX}{\mathcal{X}}
\newcommand{\abs}[1]{\left |#1\right |}
\newcommand{\vishnu}[1]{\textcolor{black}{#1}}
\newcommand{\xinyi}[1]{\textcolor{black}{#1}}
\begin{document}
	\title{\LARGE \bf
		IDCAIS: Inter-Defender Collision-Aware Interception Strategy against Multiple Attackers
	}

	\author{Vishnu S. Chipade, Xinyi Wang and Dimitra Panagou
		\thanks{Vishnu Chipade is now an independent researcher; this work was mostly conducted when he was a PhD student at the Department of Aerospace Engineering, University of Michigan. Xinyi Wang and Dimitra Panagou are with the Department of Robotics and the Department of Aerospace Engineering, University of Michigan, Ann Arbor, MI, USA; 
			{\tt\small (vishnuc, xinywa,
				dpanagou)@umich.edu}}
		\thanks{This work has been funded by the Center for Unmanned Aircraft Systems (C-UAS), a National Science Foundation Industry/University Cooperative Research Center (I/UCRC) under NSF Award No. 1738714 along with significant contributions from C-UAS industry members.}
	}	
	
	\maketitle
	\thispagestyle{empty}
	\pagestyle{empty}

	\begin{abstract}
	In the prior literature on multi-agent area defense games, the assignments of the defenders to the attackers are done based on a cost metric associated only with the interception of the attackers. In contrast to that, this paper presents an Inter-Defender Collision-Aware Interception Strategy (IDCAIS) for defenders to intercept attackers in order to defend a protected area, such that the defender-to-attacker assignment protocol not only takes into account an interception-related cost but also takes into account any possible future collisions among the defenders on their optimal interception trajectories. In particular, in this paper, the defenders are assigned to intercept attackers using a mixed-integer quadratic program (MIQP) that: 1) minimizes the sum of times taken by defenders to capture the attackers under time-optimal control, {as well as} 2) helps eliminate or delay possible future collisions among the defenders on the optimal trajectories. To prevent inevitable collisions on optimal trajectories or collisions arising due to {time-sub-optimal} behavior by the attackers, a minimally augmented control using exponential control barrier function (ECBF) is also provided. {Simulations show the efficacy of the approach.} 
		
		\textit{Index Terms}--- cooperative robots, defense games, multi-robot systems and optimal control.		
		
	\end{abstract}
	
	\section{Introduction}
	

\vishnu{In area defense games, various cooperative techniques are used in 
\xinyi{ \cite{isaacs1999differential,huang2011differential, pierson2016intercepting,chen2017multiplayer,shishika2020cooperative,coon2017control,zheng2020time}} for teams of defenders to intercept multiple attackers before they reach a safety-critical (protected) area.} \vishnu{These approaches typically rely on simplified kinematic motion models, where time-minimizing or distance-minimizing defender-to-attacker, linear assignment is sufficient to ensure collision-free optimal trajectories for the players. However, in real-world applications, the robots operate under the influence of acceleration requiring dynamics motion models, which gives rise to possible collisions on the agents' optimal trajectories. In this paper, we address this critical gap by developing a collaborative defense strategy that explicitly accounts for defender dynamics and the potential for collisions along their optimal interception trajectories. Our approach enables an assignment of defenders to attackers in a way that not only maximizes interception effectiveness but also ensures safe coordination among the defenders.}.

\vishnu{
Numerous pursuit-evasion variants \cite{weintraub2020introduction} have been studied with diverse scenarios and solution methods, including HJI equations \cite{ho1965differential}, isochrones \cite{oyler2016pursuit}, Voronoi-based strategies \cite{pan2012pursuit,pierson2016intercepting}, and numerical schemes \cite{bardi1999numerical}. However, these works often overlook the defense of a specific area, making them less suitable for addressing area protection against attackers.
}

{
Different from standard pursuit-evasion games, the area or target defense problem involves an area (target or protected area) which an attacker player attacks or tries to reach and a defender player aims to prevent the attack.
These area defense games are modeled as zero-sum differential game and solved using various solution techniques including optimal control 
\xinyi{\cite{isaacs1999differential, mohanan2020target}, reachability analysis \cite{huang2011differential, hu2023multi}, semi-definite programming \cite{landry2018reach}, {second-order} cone programming \cite{lorenzetti2018reach}, model predictive control \cite{lee2016model}, control barrier functions (CBFs) \cite{guerrero2021area}, and geometric approach \cite{yan2021cooperative}.}
}


	{Due to the curse of dimensionality, these approaches for 1 attacker vs 1 defender (1-vs-1) area defense games are extended to multi-player scenarios using a ``divide and conquer'' approach. 
	 For example, in \cite{chen2017multiplayer}, the authors solve the reach-avoid game for each pair of defenders and attackers operating in a compact domain with obstacles using a Hamilton-Jacobi-Issacs (HJI) reachability approach. The solution is then used to assign defenders against the attackers in multiplayer case using graph-theoretic maximum matching.} {The authors in \cite{pierson2016intercepting} develop a distributed algorithm for the cooperative pursuit of multiple evaders by multiple pursuers, using an area-minimization strategy based on a Voronoi tessellation in a bounded convex environment. Each defender targets it nearest evader and cooperate with other neighboring pursuers who share the same evader in order to capture all the evaders in finite time.}
	
{In a different approach \cite{shishika2020cooperative}, where defenders are restricted to move on the perimeter of protected area, local reach-avoid games between small teams of defenders and attackers are first solved, and then the teams of defenders are assigned to capture the oncoming attackers using a polynomial-time algorithm. For a reach-avoid game in convex domains, authors in \cite{yan2019task} provide geometric barriers separating the winning regions for groups of pursuers vs one evader by using analytical methods and assign groups of pursuers against the evaders using a mixed integer program.  In \cite{yan2022matching}, the authors decompose a 3D multi-player reach-avoid game into smaller games involving groups of three or fewer pursuers and only one evader by proposing a potential function and then  solve a sequential matching problem 
to assign these groups of pursuers to capture the evaders. \vishnu{The authors in \cite{yan2024pursuit} investigate a multiplayer reach-avoid differential game in the presence of general polygonal obstacles.}}
	
	
	The aforementioned studies provide useful insights to the area or target defense problem, however, are limited in application due to the use of the single-integrator motion model, and due to the lack of consideration of inter-defender collisions. {While some works do use the double-integrator motion model \cite{coon2017control} or the unicycle motion model \xinyi{\cite{zheng2020time, li2022intelligent}}, these approaches do not consider collisions among the defenders that may happen on the defenders' optimal interception trajectories.}


	{In this paper, we resort to the common ``divide and conquer'' paradigm and provide a collaborative interception strategy for the defenders that takes into account any possible future collisions among the defenders on their optimal interception trajectories at the defender-to-attacker assignment stage.} {For a pairwise (one defender one attacker) target defense scenario, a differential game, particularly zero-sum game, solved using HJI reachability approach to find optimal strategies for both players, is an ideal approach. However, this approach is computationally very demanding even for players with 2 state variables, resulting in 4D game \cite{chen2017multiplayer}. Considering a double integrator motion model for players, resulting in 8D game, would be worse in terms of computation, with collision predictions on such optimal trajectories making it even worse. Thus, to keep the overall computation burden low for the defenders' team, in this work, we build on the time-optimal guidance problem for isotropic rocket \cite{bakolas2014optimal}, which uses a damped double-integrator motion model. We first obtain a time-optimal strategy for a defender to capture a given attacker operating under a similar time-optimal control strategy to capture the protected area}. We then use the times of interception by each defender to capture each attacker, and the times of possible collisions on the defenders' optimal trajectories, to assign the defenders to the attackers. We call this assignment the collision-aware defender-to-attacker assignment (CADAA). Furthermore, we use exponential control barrier functions (ECBF) \cite{nguyen2016exponential, wang2017safe} in a quadratically constrained quadratic program (QCQP) to augment defenders' optimal control actions in order to avoid collision with other fellow defenders when such collisions are unavoidable solely by CADAA. {The overall defense strategy that combines CADAA and ECBF-QCQP is called as inter-defender collision-aware interception strategy (IDCAIS).}
	
	{Note that various strategies have been proposed in the literature for collision avoidance control for autonomous agents \cite{huang2019collision}, but
    we chose a framework similar to the ECBF based quadratic programming technique \cite{nguyen2016exponential, wang2017safe} because this technique allows to generate a collision-avoidance control in the event of imminent collisions with a minimal correction to the nominal control action of an agent. }
	
{This is the first time a defender-attacker assignment framework is presented that considers possible future collisions among the defenders on their optimal interception trajectories. In summary, the major contributions of this paper compared to the prior literature are: (1) time-optimal interception strategy for a defender against an attacker with time-optimal control strategy, {both moving under double integrator motion model}; (2) a mixed-integer quadratic program (MIQP) to find collision-aware assignment CADAA in order to capture as many attackers as possible and as quickly as possible, while preventing or delaying the possible future collisions among the defenders; and 3) a heuristic to quickly find future collision times on defenders' time-optimal trajectories. }

	\vishnu{Organization:} {Section \ref{sec:math_model} provides the problem statement. The interception strategy for the 1-defender-vs-1-attacker area defense game is given in Section \ref{sec:one-vs-one_interception_game}, and that for multiple defenders vs multiple attackers is discussed in Section \ref{sec:many-vs-many_interception_game}. Simulation results and conclusions are given in Section \ref{sec:simulations} and Section \ref{sec:conclusions}, respectively.}

	\section{Modeling and Problem Statement}\label{sec:math_model}

	\textit{Notation}: 	
	\vishnu{We use bold letters to denote vector quantities and matrices. We use $\norm{\cdot}$ to denote the Euclidean norm and $\abs{\cdot}$ for absolute value.}	
	 A ball of radius $\rho$ centered at the origin is defined as $\calB_{\rho}=\{\mathbf{r} \in \bR^2| \norm{\mathbf{r}} \le \rho \}$. 
	\vishnu{ $A\backslash B$ denote the set of elements in $A$ but not in $B$. $A \times B$ denotes the cartesian product of sets A and B.}
	  We define $C(\theta) =\cos(\theta)$, {$S(\theta) =\sin(\theta)$} and \vishnu{use $\mathbf{0}_n$ and $\mathbf{I}_n$ to denote n-dimensional zero and identity matrix, respectively.}

	We consider $N_a$ attackers denoted as $\calA_i$, $i \in I_a= \{1,2,...,N_a\}$, and $N_d$ defenders denoted as $\calD_j$, $j \in I_d= \{1,2,...,N_d\}$, operating in a 2D environment $\calW \subseteq \mathbb{R}^2$ that contains {a circular protected area} $\calP \subset \calW$, defined as $\calP=\{\textbf{r} \in \bR^2 \;| \; \norm{\textbf{r}-\textbf{r}_p}\le \rho_p\}$. The number of defenders is no less than that of attackers, i.e., $N_d \ge N_a$. {The agents $\calA_i$ and $\calD_j$ are modeled as point masses.}
	\begin{wrapfigure}[10]{r}{0.24\textwidth}
		\includegraphics[width=1\linewidth,trim={5.5cm 3.15cm 4.5cm 3.75cm},clip]{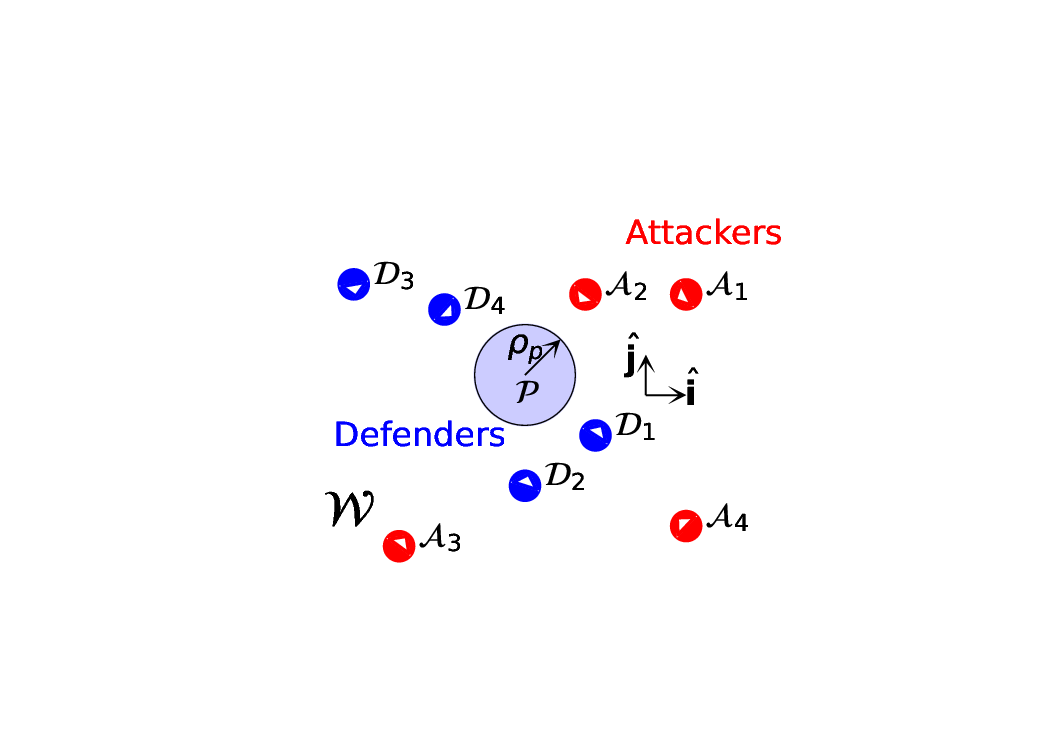}
		\vspace{-6mm}
		\caption{Problem Setup}
		\label{fig:problem_formulation}
	\end{wrapfigure}    

	\vishnu{Let $\mathbf{r}_{ai}(t)=[x_{ai}(t),\; y_{ai}(t)]^T \in \bR^2$ and $\mathbf{r}_{dj}(t)=[x_{dj}(t),\; y_{dj}(t)]^T \in \bR^2$ be the position vectors of $\calA_i$ and $\calD_j$, respectively; $\mathbf{v}_{ai}(t)=[v_{x_{ai}}(t),\; v_{y_{ai}}(t)]^T \in \bR^2$, $\mathbf{v}_{dj}(t)=[v_{x_{dj}}(t),\; v_{y_{dj}}]^T \in \bR^2$ be the velocity vectors, respectively, and $\mathbf{u}_{ai}(t)=[u_{x_{ai}}(t),\; u_{y_{ai}}(t)]^T \in \bR^2$, $\mathbf{u}_{dj}(t)=[u_{x_{dj}}(t),\; u_{y_{dj}}(t)]^T \in \bR^2$ be the accelerations}, which serve also as the control inputs, respectively, all resolved in a global inertial frame $\calF_{gi} (\hat {\mathbf{i}}, \hat {\mathbf{j}})$ (see \vishnu{Fig.~}\ref{fig:problem_formulation}).
	The agents move under double integrator dynamics with linear drag (damped double integrator), similar to isotropic rocket \cite{bakolas2014optimal}:
    \vishnu{
	\be\label{eq:dampedDIDyn}
\dot{\mathbf{x}}_{\imath} (t)
= \underbrace{\bbmat \mathbf{0}_{2} & \mathbf{I}_2 \\ \mathbf{0}_{2} & -C_D\mathbf{I}_2 \ebmat}_{\mathbf{A}} \mathbf{x}_{\imath}(t) + \underbrace{\bbmat \mathbf{0}_2\\ \mathbf{I}_2\ebmat}_{\mathbf{B}} \mathbf{u}_{\imath}(t), \hspace{2mm} \mathbf{x}_{\imath}(0) = \mathbf{x}_{\imath0}
\ee
}
where \vishnu{$\mathbf{x}_{\imath}(t) = [\mathbf{r}_{\imath}^T(t), \mathbf{v}_{\imath}^T(t)]^T \in \bR^4$ is the state vector of agent} ${\imath} \in \{ai \;|\; i \in I_a\} \cup \{dj \; | \; j \in I_d \}$ {(with $ai$ denoting the $i^{th}$ attacker $\calA_i$ and $dj$ denoting the $j^{th}$ defender $\calD_j$)}, \vishnu{$\mathbf{A} \in \bR^{4 \times 4}$ is the state matrix, $\mathbf{B} \in \bR^{4 \times 2}$ is the input matrix}, and $C_D>0$ is the known, constant drag coefficient. The accelerations \vishnu{$\mathbf{u}_{ai}(t)$ and $\mathbf{u}_{dj}(t)$ are bounded by $\bar{u}_a $, $\bar{u}_d$ as: $
	\norm{\mathbf{u}_{ai}(t)} \le \bar{u}_a, \quad
	\norm{\mathbf{u}_{dj}(t)} \le \bar{u}_d. 
$}
\vishnu{The drag term in the} damped double integrator \eqref{eq:dampedDIDyn} inherently poses a speed bound on each agent \vishnu{under bounded acceleration}, i.e.,
\vishnu{
\be \label{eq:velocity_bounds}
\norm{\mathbf{v}_{ai}(t)} \le \bar{v}_a=\frac{\bar{u}_a}{C_D}  \text{ and } \norm{\mathbf{v}_{dj}(t)} \le \bar{v}_d=\frac{\bar{u}_d}{C_D},
\ee
}
\vishnu{provided the initial velocities satisfy $\norm{\mathbf{v}_{ai0}
}<\frac{\bar{u}_a}{C_D}$, for $i \in I_a$ and $\norm{\mathbf{v}_{dj0}
}<\frac{\bar{u}_d}{C_D}$, for $j \in I_d$ (proof in Appendix~\ref{append:velocity_bounds}). This eliminates the need for explicit constraint} on the velocity of the agents while designing bounded controllers, as in earlier literature. {So we have \vishnu{$\mathbf{x}_{ai}(t) \in \calX_a \subset \bR^4$}, for all $i \in I_a$, where $\calX_a = \bR^2 \times \calB_{\bar{v}_a}$ and  \vishnu{$\mathbf{x}_{dj}(t) \in \calX_d \subset \bR^4$}, for all $j \in I_d$, where $\calX_d = \bR^2 \times \calB_{\bar{v}_d}$, \vishnu{with $\calB_{\bar{v}_a}$ and $\calB_{\bar{v}_d}$ denoting the balls of radius $\bar{v}_a$ and $\bar{v}_a$, respectively, around the origin}. We denote by $\calX=\calX_a^{N_a}\times \calX_{d}^{N_d}$ the configuration space of all the agents.} We also make the following assumptions:
 \begin{assumption} \label{assum:fast_defenders}
  	The defenders are at least as fast as the attackers, i.e., $\bar{u}_a \le \bar{u}_d$.
 \end{assumption}
 Note that under Assumption~\ref{assum:fast_defenders}, and by virtue of eq.~\eqref{eq:velocity_bounds}, the maximum velocities of the attackers and defenders also satisfy $\bar{v}_a \le \bar{v}_d$.
\begin{assumption} Each player (either defender or attacker) knows the states of all the other players.  
\end{assumption}

  Each defender $\calD_j$ is endowed with an interception radius $\rho_d^{int}$, i.e., defender $\calD_j$ captures attacker $\calA_i$ when \vishnu{$\norm{\mathbf{r}_{dj}(t)-\mathbf{r}_{ai}(t)}<\rho_d^{int}$ for some $t>0$}.
	{The attackers aim to reach the protected area $\calP$ as quickly as possible. 
 The defenders aim to capture as many of these attackers as possible before they enter the protected area, and ensuring they themselves do not collide with each other. Formally, we consider the following problem.}

 {
	\begin{problem}[Collision-Aware Multi-player Defense] \label{prob:collision_aware_multi_player_defense}
	Design a control strategy $\mathbf{u}_{dj}$ for the defenders $\calD_j$, $\forall j \in I_d$ to: 1) intercept as many attackers and as quickly as possible before they reach the protected area, 2) ensure that the defenders do not collide with each other, i.e., $\norm{\mathbf{r}_{dj}(t)-\mathbf{r}_{dj'}(t)}>\rho_{d}^{col}$ for all $j, j' \neq j \in I_d$ for all $t\ge 0$.
	\end{problem}
 }
Before discussing the solution to Problem \ref{prob:collision_aware_multi_player_defense}, we discuss the one defender against one attacker game as follows.

\section{Interception strategy: one defender against one attacker}\label{sec:one-vs-one_interception_game}
In this section, we first consider a scenario with one attacker attacking the protected area $\calP$ and one defender trying to defend $\calP$ by intercepting the attacker (1D-vs-1A game). The solution to these one-on-one games is then used later to solve Problem \ref{prob:collision_aware_multi_player_defense}. We consider a time-optimal control strategy for the defender. Before we discuss this strategy, we first discuss time-optimal control for an agent moving under  \eqref{eq:dampedDIDyn} in the following subsection.

\subsubsection{Time-optimal control under damped double integrator} \label{sec:timeOptimalControl}
{The time-optimal control problem for an agent starting at $\mathbf{x}_{\imath 0}=[x_{\imath 0},y_{\imath 0},v_{x_{\imath 0}},v_{y_{\imath 0}}]^T$ and moving under \eqref{eq:dampedDIDyn} to reach the boundary of the circular area centered at $\mathbf{r}_{\dag} =[x_{\dag}, y_{\dag}]^T$ with radius $\rho_{\dag}$, is formally defined as: }
\bse \label{eq:minTimeProblem}
\begin{align}
\mathbf{u}_{\imath}^*(t) =\ds \argmin_{\mathbf{u}_{\imath}(t)} \hspace{2mm} & J(\mathbf{u}(t)) =\int_{0}^{t_f} dt\\
\text{subject to  } & 1)\; \dot{\mathbf{x}}_{\imath}(t) = {\mathbf{A}} \mathbf{x}_{\imath}(t)+{\mathbf{B}} \mathbf{u}_{\imath}(t),	\\
&2)\; \norm{\mathbf{u}_{\imath}(t)}\le \bar{u},\\
& 3)\; \mathbf{x}_{\imath}(0)=\mathbf{x}_{\imath 0},\\
& {4)\; \norm{\mathbf{r}_{\imath}(t_f) - \mathbf{r}_{\dag}}^2 - \rho_{\dag}^2 = 0,} \label{eq:MinTimeProblem_constraint4}
\end{align}
\ese
where {$\bar{u}$ is the maximum limit on the acceleration}, {$t_f$ is free terminal time at which the agent reaches the boundary of the circular area.}
\vishnu{Using} Pontryagin's minimum principle \cite{kirk2004optimal}, the optimal control $\mathbf{u}_{\imath}^*(t)$ solving \eqref{eq:minTimeProblem} is given by:
\be \label{eq:timeOptimalControl1}
\baa
\mathbf{u}_{\imath}^*(t) = \bar{u} \big[\frac{-\lambda_3^*(t)}{\sqrt{(\lambda_3^*(t))^2+(\lambda_4^*(t))^2}},  \frac{-\lambda_4^*(t)}{\sqrt{(\lambda_3^*(t))^2+(\lambda_4^*(t))^2}}  \big]^T
\eaa
\ee
where $\bm{\lambda}^*(t) 
= [\lambda_1^*(t),\; 
\lambda_2^*(t),\;\lambda_3^*(t),\;\lambda_4^*(t)]^T= \bbmat c_1 , \hspace{2.5mm} c_2 , \hspace{2.5mm} c_1-c_3e^{-C_D t}, \hspace{2.5mm} c_2-c_4e^{-C_D t} \ebmat^T$ is the co-state vector, {where $c_i$ for $i \in \{1,2,3,4\}$ are some constants that depend on the boundary conditions in \eqref{eq:minTimeProblem}}.
 Similar to the time-optimal control for simple double integrator dynamics \cite{coon2017control,feng1986acceleration}, the optimal control $\mathbf{u}_{\imath}^*(t)$ in \eqref{eq:timeOptimalControl1} is a constant vector for all $t \in [0,t_f]$ as shown in Proposition 3.2 in \cite{bakolas2014optimal}. One can write $\mathbf{u}_{\imath}^*(t)=\bar{u}[C(\theta_{\imath}^*),\; S(\theta_{\imath}^*)]^T$, where $\theta_{\imath}^*$ is the angle made by the vector $\mathbf{u}_{\imath}^*(t)$ with the $x$-axis.
 
{After integrating the system dynamics \eqref{eq:dampedDIDyn} under the constant input $\mathbf{u}_{\imath}^*(t)$, the position trajectories are}:
\be \label{eq:optimal_trajectories}
\baa
x_{\imath}^*(t)=x_{\imath 0}+ E_1(t) v_{x_{\imath 0}} + E_2(t) \bar{u}C(\theta_{\imath}^*),\\
y_{\imath}^*(t)=y_{\imath 0}+ E_1(t) v_{y_{\imath 0}} + E_2(t)  \bar{u}S(\theta_{\imath}^*).
\eaa
\ee
where $E_1(t)=\frac{1-e^{-C_Dt}}{C_D}$ and $E_2(t) =\frac{t-E_1(t)}{C_D} $.
{By using the terminal constraint from \eqref{eq:MinTimeProblem_constraint4} in eq.~\eqref{eq:optimal_trajectories}}, the optimal control $\mathbf{u}_{\imath}^*(t)=\bar{u}[C(\theta_{\imath}^*),\; S(\theta_{\imath}^*)]^T$ and the terminal time $t_f$ can be found by solving the following equations simultaneously \cite{bakolas2014optimal}.
{
\be \label{eq:optimal_control_soln}
\baa
x_{\dag} - \rho_{\dag}C(\theta_{\imath}^*)=x_{\imath 0}+ E_1(t_f) v_{x_{\imath 0}} + E_2(t_f)  \bar{u}C(\theta_{\imath}^*),\\
y_{\dag} - \rho_{\dag}S(\theta_{\imath}^*)=y_{\imath 0}+ E_1(t_f) v_{y_{\imath 0}} + E_2(t_f)  \bar{u}S(\theta_{\imath}^*).
\eaa
\ee
}
\vishnu{The solution to Eq.~\eqref{eq:optimal_control_soln} always exists and is unique. For detailed proof see Appendix~\ref{append:existence_and_uniqueness}.}
\subsubsection{Defender's interception strategy against an attacker}

{In this work, we assume that an attacker is risk-taking, i.e., it does not care about its own survival, and its main goal is to damage the protected area as quickly as possible.  Hence the attacker is assumed to apply time-optimal control calculated in \eqref{eq:attackMinTimeProblem} to reach boundary of the protected area as quickly as possible.
}
\bse \label{eq:attackMinTimeProblem}
\begin{align}
\mathbf{u}_a^*(t)=\argmin_{\mathbf{u}_a(t)} \hspace{2mm} & J_a(\mathbf{u}_a(t)) =\int_{0}^{t_{a}^{int}} dt\\
\text{subject to  } & 1)\; \dot{\mathbf{x}}_a = {\mathbf{A}} \mathbf{x}_a+{\mathbf{B}} \mathbf{u}_a,	\\
& 2)\; \norm{\mathbf{u}_a(t)}\le \bar{u}_a,\\
& 3)\; \mathbf{x}_a(0)=\mathbf{x}_{a0},\\
& 4)\;  \norm{\mathbf{r}_a(t_{a}^{int})-\mathbf{r}_p}^2-(\varrho_p)^2=0,
\end{align}
\ese
{where $t_{a}^{int}$ is the free time that the attacker requires to reach the protected area $\calP$ and $\varrho_p = \rho_p + \rho_{a}^{int}$, where $\rho_{a}^{int}>0$ is the capture radius of the attacker.} 
The defender also employs a time-optimal control action based on the time-optimal control that the attacker may take to reach the protected area. {The control strategy is obtained by solving the following time-optimal problem for the defenders starting at $\mathbf{x}_{d0}$.}
\bse \label{eq:defendBest-against-worstMinTimeProblem}
\noindent
\begin{align}
\mathbf{u}_d^*(t)=\ds \argmin_{\mathbf{u}_d(t)} \hspace{2mm} & J_d(\mathbf{u}_d(t)) = {\int_{0}^{t_{d}^{int}}} dt\\
\text{subject to  } 1) & \; \dot{\mathbf{e}}(t) = {\mathbf{A}} \mathbf{e}(t)+{\mathbf{B}} (\mathbf{u}_d(t) - \mathbf{u}_a^*(t)),	\\
 2) &\;  \mathbf{u}_a^*(t) \text{ is the solution of \eqref{eq:attackMinTimeProblem}},\\
3)& \;\norm{\mathbf{u}_d(t)}\le {\bar{u}_d'},\\
4)& \; \mathbf{e}(0)=\mathbf{e}_{0},\\
5)& \; \vishnu{\norm{\mathbf{e}_{\mathbf{r}}(t_d^{int})}=0}, \label{eq:eq:defendBest-against-worstMinTimeProblem_constraint5}
\end{align}
\ese
where $\mathbf{e}(t)=[e_x(t),e_y(t),e_{v_x}(t),e_{v_y}(t)]^T=\mathbf{x}_d(t)-\mathbf{x}_a(t)$, \vishnu{$\mathbf{e}_{\mathbf{r}}(t) = [e_x(t), e_y(t)]^T$, and} $t_{d}^{int}$ is the time when the defender captures the attacker {and $\bar{u}_d' = \bar{u}_d -\epsilon_d$ where $\epsilon_d$ ($\ll$ 1) is a very small, strictly positive, user-defined constant to ensure that the constraints satisfy the strict complementary slackness condition, and in turn ensure continuity of the solution to the quadratically-constrained quadratic program (QCQP) used for collision avoidance in multi-defender case discussed later in Subsection \ref{sec:collision_avoidance_control}}. \vishnu{Note that we set $\rho_{d}^{int}$ to be 0 in Eq.~\eqref{eq:eq:defendBest-against-worstMinTimeProblem_constraint5} instead of a circle of radius $\rho_{d}^{int}$ (the interception radius), to ensure that the defender directly targets the attacker and maximizes its interception impact; in practice the capture is still considered successful when the defender reaches within $\rho_{d}^{int}$ distance of the attacker. }

Similar to \eqref{eq:timeOptimalControl1}, one can establish that the optimal controls $\mathbf{u}_d^*(t)={\bar{u}_d'}[C(\theta_d^*),S(\theta_d^*)]^T$ and $\mathbf{u}_a^*(t)=\bar{u}_a[C(\theta_a^*),S(\theta_a^*)]^T$ are constant vectors. The angles $\theta_d^*, \; \theta_a^*,$ and the terminal times $ t_{d}^{int}, \; t_{a}^{int}$ can be found by solving the following set of algebraic equations:
{
\be \label{eq:defendBest-against-worstMinTimeSolution}
\begin{aligned}
x_p - \varrho_p C(\theta_a^*)  &\hspace{-0mm}=  x_{a}+ E_1(t_a^{int}) v_{x_{a}} + E_2(t_{a}^{int})  \bar{u}_aC(\theta_a^*),\\
y_p - \varrho_p S(\theta_a^*) &\hspace{-0mm}= y_{a}+ E_1(t_a^{int}) v_{y_{a}} + E_2(t_{a}^{int})  \bar{u}_aS(\theta_a^*) ,\\
0 &\hspace{-0mm}=e_x + E_1(t_d^{int}) e_{v_x} + E_2(t_{d}^{int})  \Delta u_x^*,\\
0 &\hspace{-0mm}=e_{y}+ E_1(t_d^{int}) e_{v_y} + E_2(t_{d}^{int}) \Delta u_y^*.
\end{aligned}
\ee
where $[\Delta u_x^*, \Delta u_y^*]^T = \mathbf{u}_d^*(t) - \mathbf{u}_a^*(t) $.}
 {The defender computes its time-optimal strategy at every time step in order to correct its action in reaction to any {time-sub-optimal} behavior by the attacker. We notice that the control action $\mathbf{u}_d^*(t)$ obtained in \eqref{eq:defendBest-against-worstMinTimeProblem} is a function of the current states of the players $\mathbf{x}_d$ and $\mathbf{x}_a$, i.e., it is a closed-loop control law.}  Next, we discuss winning regions of the players.
 
\subsubsection{Winning regions of the players}
In this section, for a given initial condition $\mathbf{x}_d$ of the defender, we characterize the set of initial conditions of the attacker for which the defender is able to capture the attacker under the control strategy \eqref{eq:defendBest-against-worstMinTimeProblem}. This set, called as winning region of the defender, is denoted by $\calR_d(\mathbf{x}_{d})$:
{
\be \label{eq:defender_winning_region}
\calR_d(\mathbf{x}_{d})= \{\mathbf{x}_a \in  \bar{\calX}_{a}|t_{d}^{int}(\mathbf{x}_{d},\mathbf{x}_{a})-t_{a}^{int}(\mathbf{x}_{a},\mathbf{r}_p) < 0  \},
\ee
where $\bar{\calX}_{a}=(\bR^2\backslash\calP)\times\calB_{\bar{v}_a}$, $t_{a}^{int}(\mathbf{x}_{a},\mathbf{r}_p)$ is the time that the attacker starting at $\mathbf{x}_{a}$ requires to reach the {protected area $\calP$ centered at  $\mathbf{r}_p$}, 
and $t_{d}^{int}(\mathbf{x}_{d},\mathbf{x}_{a})$ is the time that defender starting at $\mathbf{x}_{d}$ requires to capture the attacker starting at $\mathbf{x}_{a}$ under the control strategy in \eqref{eq:defendBest-against-worstMinTimeSolution}.}
Similarly, the winning region of the attacker, denoted as $\calR_{a}(\mathbf{x}_d)$, is defined as
{
$
\calR_a(\mathbf{x}_{d})= \bar{\calX}_{a} \backslash \calR_d(\mathbf{x}_{d})
$.}

{In the following theorem, we formally prove the effectiveness of the defender's time-optimal strategy.
\begin{theorem}
The defender $\calD$, starting at $\mathbf{x}_d$, moving under the control strategy $\mathbf{u}_d^*(t)$ (from \eqref{eq:defendBest-against-worstMinTimeProblem}) against the attacker $\calA$ initially {located in the winning region $\calR_d(\mathbf{x}_{d})$ of the defender given in \eqref{eq:defender_winning_region}}, captures $\calA$ before $\calA$ reaches the protected area $\calP$, i.e., $\norm{\mathbf{r}_d(t_{d}^{int})-\mathbf{r}_a(t_{d}^{int})}$ $\le \rho_{d}^{int}$ and $\norm{\mathbf{r}_p-\mathbf{r}_a(t_{d}^{int})}> \varrho_p$, if $\calA$ {applies time-optimal control} \eqref{eq:attackMinTimeProblem}; 
\end{theorem}
\begin{proof}
Time-optimal trajectories by construction ensure that the defender captures the attacker whenever the attacker starts inside $\calR_d(\mathbf{x}_d)$ before the attacker reaches the protected area.
\end{proof}
}


\section{Collision-Aware Interception Strategy: multiple defenders against multiple attackers}\label{sec:many-vs-many_interception_game}
To solve Problem \ref{prob:collision_aware_multi_player_defense}, we design the defenders' control strategy by:
1) solving each pairwise game, and 2) assigning the defenders to go against particular attackers based on a cost metric that relies on the solutions to the pairwise games, which is a common approach that many of previous works used. However, different from the earlier works, we define the assignment metric based on 1) the time-to-capture and 2) the time of possible collision between the defenders. We aim to find the assignment so that the defenders {1) intercept as many attackers as possible before they reach the protected area, and 2) ensure that defenders have minimum possible collisions among themselves.} We call this collision-aware defender-to-attacker assignment (CADAA). 

{{In the rest of the section, we present the components of CADAA: First, we present a Mixed-Integer Quadratic Program to solve the problem of assigning defenders to attackers based on the time-to-capture and the time of possible inter-defender collisions. Then, we present an algorithm to find time-to-collision for the defenders, as well as a heuristic approach to compute these times in a more efficient manner, which is roughly four times faster than the original algorithm. Finally, we present a controller for  inter-defender collision avoidance that is based on Exponential Control Barrier Functions (ECBFs).}} \footnote{Note that the design of a ECBF-based controller was chosen as a standard control synthesis method that is applicable to constraints of higher-order relative degree with respect to system dynamics, such as the case in this paper, see later on in Section IV-C. Other control methods on collision avoidance could also have been employed.}

\subsection{A MIQP for defender-to-attacker assignment}

 Let $\delta_{ji}$ be the binary decision variable that takes value 1 if the defender $\calD_j$ is assigned to intercept the attacker $\calA_i$ and 0 otherwise. Let  $C_d^{int}(\mathbf{X}_{dj}^{ai})$ be the cost incurred by the defender $\calD_j$ to capture the attacker $\calA_i$ given by:
\be \label{eq:cost_to_capture_attacker}
C_d^{int}(\mathbf{X}_{dj}^{ai})= \begin{cases}
		t_d^{int}(\mathbf{x}_{dj}, \mathbf{x}_{ai}), & \text{if } \mathbf{x}_{ai} \in \calR_d(\mathbf{x}_{dj});\\
		{c_l}, & \text{otherwise};
\end{cases}
\ee
where $\mathbf{X}_{dj}^{ai} = [\mathbf{x}_{dj}^T, \mathbf{x}_{ai}^T]^T$, $t_d^{int}(\mathbf{x}_{dj}, \mathbf{x}_{ai})$ is the time required by the defender {initially located} at $\mathbf{x}_{dj}$ to capture the attacker {initially located} at $\mathbf{x}_{ai}$ as obtained by solving \eqref{eq:defendBest-against-worstMinTimeSolution}, {$c_l\;(>>1)$ is a very large number}, and $\calR_d(\mathbf{x}_{dj})$ is the winning region of the defender {initially located} at $\mathbf{x}_{dj}$ defined in \eqref{eq:defender_winning_region}. As defined in \eqref{eq:cost_to_capture_attacker}, the cost $C_d^{int}(\mathbf{X}_{dj}^{ai})$ {is set to very the large value $c_l$} whenever the defender is unable to capture the attacker before the attacker can reach the protected area. 

Two defenders $\calD_j$ and $\calD_{j'}$ are said to have collided with each other, 
if $\norm{\mathbf{r}_{dj}(t) -\mathbf{r}_{dj'}(t)} < \rho_{d}^{col}$ for some $t\ge 0$, where $\rho_{d}^{col}$ is a user defined collision parameter.
Consider that two defenders starting at $\mathbf{x}_{dj}$ and $\mathbf{x}_{dj'}$ and operating under the optimal control actions obtained from \eqref{eq:defendBest-against-worstMinTimeProblem} that correspond to the attackers starting at $\mathbf{x}_{ai}$ and $\mathbf{x}_{ai'}$, respectively, collide; then, let $t_d^{col}(\mathbf{X}_{dj}^{ai}, \mathbf{X}_{dj'}^{ai'})$ denote the smallest time at which this collision occurs. Since the analytical expressions for optimal trajectories are known, see \eqref{eq:defendBest-against-worstMinTimeSolution}, we can easily find this time $t_d^{col}(\mathbf{X}_{dj}^{ai}, \mathbf{X}_{dj'}^{ai'})$ \vishnu{using the distance evolution between the defenders}.
Let $C_d^{col}(\mathbf{X}_{dj}^{ai}, \mathbf{X}_{dj'}^{ai'})$ \vishnu{be the cost of a potential collision between $\calD_j$ and $\calD_{j'}$}:
\vishnu{
\be
\baa
C_d^{col}(\mathbf{X}_{dj}^{ai}, \mathbf{X}_{dj'}^{ai'}) = 
\frac{1}{t_d^{col}(\mathbf{X}_{dj}^{ai}, \mathbf{X}_{dj'}^{ai'})} - \frac{1}{c_l}
\eaa
\ee
}
{This cost is chosen so that collisions that occur earlier are penalized more in order to delay or, if possible, avoid them.} 
The collision time $t_d^{col}(\mathbf{X}_{dj}^{ai}, \mathbf{X}_{dj'}^{ai'})$ is obtained using Algorithm~\ref{alg:collision_time} and it is assumed that
 $t_d^{col}(\mathbf{X}_{dj}^{ai}, \mathbf{X}_{dj'}^{ai'}) > t_{\epsilon}$, for all $j,j' \in I_d$ where $t_{\epsilon}$ is a small positive number.
 \vishnu{Under dynamics in Eq.~\eqref{eq:dampedDIDyn} and constant acceleration input, the time evolution of the distance between any pair of defenders changes direction at max four times (proof in Appendix~\ref{append:finite_direction_changes_of_distance}).
Thus, in Algorithm~\ref{alg:collision_time}, the function \collisionTime $(\mathbf{x}_{dj}^{ai}, \mathbf{x}_{dj'}^{ai'})$ finds the time-of-collision between $\calD_j$ and $\calD_{j'}$ by finding local optima of distance function by alternatively minimizing the distance and the negative of the distance function (targeting maximum) with initial guess progressively shifted right of the latest optimum over the time interval $[0,t_{max}]$, where $t_{max} = \max(t_d^{int}(\mathbf{x}_{dj}, \mathbf{x}_{ai}),t_d^{int}(\mathbf{x}_{dj'}, \mathbf{x}_{ai'}))$. And, $fmin(f(t),t_0, [\underline{t}, \bar{t}])$  finds the minimum of function $f(t)$ subject to the bounds $[\underline{t}, \bar{t}]$ with $t_0$ being the initial guess (e.g., \textit{fmincon} in MATLAB). Algorithm \ref{alg:collision_time} is guaranteed to find the collision time, if the corresponding defenders collide within the time $[0, t_{max}]$.
}
\begin{algorithm}[ht]
		\caption{Algorithm to find collision times}
		\label{alg:collision_time}
		
		\KwIn{$\mathbf{X}_{d}(0)$,  $\mathbf{X}_{a}(0)$ }
		\KwOut{$\{ t_d^{col}(\mathbf{x}_{dj}^{ai}, \mathbf{x}_{dj'}^{ai'} ) | j, j' \in I_d; i,i' \in I_a \}$}
		\SetKwFunction{collisionTime}{collisionTime}
		\SetKwProg{Fn}{Function}{:}{}
		\For{$j, j' \in I_d ; i, i' \in I_a $}{
            \vishnu{$t_d^{col}( \mathbf{x}_{dj}^{ai}, \mathbf{x}_{dj'}^{ai'} ) = c_l;$}\\
            \If{\vishnu{$j \neq j' \;\&\; i \neq i'$}}{
		$t_d^{col}( \mathbf{x}_{dj}^{ai}, \mathbf{x}_{dj'}^{ai'} )$ =         \collisionTime $(\mathbf{x}_{dj}^{ai}, \mathbf{x}_{dj'}^{ai'})$;
		}
            }
		\Return $\{ t_d^{col}(\mathbf{x}_{dj}^{ai}, \mathbf{x}_{dj'}^{ai'} ) | j, j' \in I_d; i,i' \in I_a \}$
		\vspace{2mm}
		\hrule
		\vspace{1mm}
    	\Fn{\collisionTime{$\mathbf{x}_{dj}^{ai}, \mathbf{x}_{dj'}^{ai'}$}}
    	{
    		$t_0 =0$;
    		\vishnu{$i= 0$}; 
    		$\gamma_t = 0.01$;
    		\vishnu{$\epsilon^{*}= 0.01$};
    		\vishnu{$\epsilon_{m}^{*} = 0.01$};\\
    		\vishnu{$\Delta t^{*} = 2\epsilon^{*}$};
    		\vishnu{$t_{prev}^{*} = t_0$};
    		\\
    		$t_{max} = \max(t_d^{int}(\mathbf{x}_{dj}, \mathbf{x}_{ai}), t_d^{int}(\mathbf{x}_{dj'}, \mathbf{x}_{ai'}))$;\\
            {sign = - sgn($(\mathbf{r}_{dj} -\mathbf{r}_{dj'})^T(\mathbf{v}_{dj} -\mathbf{v}_{dj'})$)};\\
    		 \While {\vishnu{$\bigl (\Delta t^{*}>\epsilon^{*} \;\&\; |t_{max}-t_{prev}^{*}|>\epsilon_{m}^{*} \;\&\;  i<4 \bigr )$}}
    		 {
        		 \text{define} $d(t) = \norm{\mathbf{r}_{dj}(t)-\mathbf{r}_{dj'}(t)}$;\\
                 $[t^{*}, d(t^{*})]$ = \vishnu{$fmin(sign*d(t), t_0, [0,t_{max}]$)};\\
                 \If{\vishnu{$d(t^*)<\rho_{d}^{col}$}}
                 { \Return $t^{*}$;
                 }
                $t_0 = t^{*} + \gamma_t$;\\
                \vishnu{$\Delta t^{*} = |t^{*}-t_{prev}^{*}|$};\\
                $\vishnu{t_{prev}^{*}} = t^{*}$;\\
                \vishnu{$i = i +1$};\\
                $sign = -sign$;
            }
            \Return \vishnu{$c_l$};
	    }
\end{algorithm}

We formulate the following MIQP to assign the defenders to the attackers such that the attackers are captured as quickly as possible while the inter-defender collisions are avoided or delayed as much as possible. 
\bse \label{eq:defender_attackers_assign_MIQP}
	\begin{align}
	\argmin_{\bm{\delta}} & \textstyle \sum_{i \in I_a} \sum_{j \in I_d} \Bigl (({1-w})C_d^{int}(\mathbf{X}_{dj}^{ai})\delta_{ji} + \nonumber\\
	&\quad {w} \textstyle \sum_{i' \in I_a} \sum_{j' \in I_d} C_d^{col}(\mathbf{X}_{dj}^{ai}, \mathbf{X}_{dj'}^{ai'})\delta_{ji}\delta_{j'i'}\Bigr) \label{eq:MILP_cost}\\
	\text{subject to } & \vishnu{\textstyle \sum_{i \in I_a} \delta_{ji} \le 1}, \quad \forall j \in I_d; \label{eq:MILP_constraint_1}\\
    & \textstyle \sum_{j \in I_{d}} \delta_{ji}=1, \quad \forall i \in I_a; \label{eq:MILP_constraint_2}\\
	& \textstyle \delta_{ji} \in \{0,1\}, \quad \forall j \in I_d, \; i \in I_a;
	\end{align}
	\ese
	where $\bm{\delta}=[\delta_{ji}| i \in I_a, j \in I_d]^T \in \{0,1\}^{N_dN_a}$ is the binary decision vector.
	 {The cost in \eqref{eq:defender_attackers_assign_MIQP} is the weighted summation} of the times that the defenders take to capture their assigned attackers, and the total cost associated with possible collisions between the defenders under the optimal control inputs corresponding to the assigned attackers; { and $w \in (0,1)$ is user specified weight of the collision cost. Larger values of $w$ result in assignments with more importance to the collisions among the defenders. In order to give more importance to the interception of the attackers, $w$ is chosen as $w =  0.25$ using the Rank Order Centroid (ROC) weights method \cite{gunantara2018review} that is commonly used in multi-objective optimization. Problem \eqref{eq:defender_attackers_assign_MIQP} is solved by using the MIP solver, Gurobi \cite{gurobi}. \vishnu{Note that, the cost $C_d^{col}(\mathbf{X}_{dj}^{ai}, \mathbf{X}_{dj'}^{ai'})$ has a finite upper bound. This allows choosing $c_l$ sufficiently large such that for any weight $w \in (0,1)$, the assignment ensures no defender is matched to an attacker they cannot intercept before the attacker reaches the protected area, provided an alternative assignment exists (possibly involving some collision conflicts). This maximizes the total number of attackers that can be intercepted. 
 }

\subsection{A heuristic to compute time-to-collision} 

	 The computational complexity of Problem in \eqref{eq:defender_attackers_assign_MIQP} depends on: 1) the computational cost of obtaining the costs $C_d^{int}$ and $C_d^{col}$, and 2) the computational cost of solving the MIQP itself. Solving the MIQP when $C_d^{int}$ and $C_d^{col}$ are given is NP-hard and the worst case computational complexity is $\text{O}(2^{N_{\delta}})$, where $N_{\delta} = N_d N_a $ is the total number of decision variables. {The worst-case computational complexities of finding $C_d^{int}$ and $C_d^{col}$ are $\text{O}(N_{\delta})$ and $\text{O}(N_{\delta}^2)$, respectively, when these costs are obtained in a centralized manner. These computational costs can be reduced further by using distributed computation.} 
	 Additionally, we provide a heuristic in Algorithm  \ref{alg:collision_times_heuristic} that reduces the computation time of finding $C_d^{col}$ using Algorithm~\ref{alg:collision_time}.
	 \begin{algorithm}[ht]
		\caption{A heuristic to find collision times}
		\label{alg:collision_times_heuristic}
		\KwIn{${{\mathbf{X}}}_{d}(0)$,  $\mathbf{X}_{a}(0)$ }
		\KwOut{$\{ t_d^{col}(\mathbf{x}_{dj}^{ai}, \mathbf{x}_{dj'}^{ai'} ) | j, j' \in I_d; i,i' \in I_a \}$}
		\For{$j, j' \in I_d ; i, i' \in I_a $}{ 
		$t_d^{col}( \mathbf{x}_{dj}^{ai}, \mathbf{x}_{dj'}^{ai'} ) = c_l;$\\
		\If{$j \neq j' \;\&\; i \neq i' \; \& \;$ \trianglesIntersect $(\mathbf{x}_{dj}^{ai}, \mathbf{x}_{dj'}^{ai'})$}{
		$t_d^{col}( \mathbf{x}_{dj}^{ai}, \mathbf{x}_{dj'}^{ai'} )$ = \collisionTime $(\mathbf{x}_{dj}^{ai}, \mathbf{x}_{dj'}^{ai'})$
		}
		}
        \vishnu{\Return $\{ t_d^{col}(\mathbf{x}_{dj}^{ai}, \mathbf{x}_{dj'}^{ai'} ) | j, j' \in I_d; i,i' \in I_a \}$}
	\end{algorithm}
	 In Algorithm \ref{alg:collision_times_heuristic}, function
	 \trianglesIntersect
	 $(\mathbf{x}_{dj}^{ai}, \mathbf{x}_{dj'}^{ai'})$ checks if the triangles, that bound the trajectories of $\calD_j$ and $\calD_{j'}$ during the respective time intervals $[0,t_d^{int}(\mathbf{x}_{dj}, \mathbf{x}_{ai})]$ and $[0,t_d^{int}(\mathbf{x}_{dj'}, \mathbf{x}_{ai'})]$, intersect (see \vishnu{Fig.~}\ref{fig:pathCollisionCheck}). {Note that the intersection of the bounding triangles is necessary but not sufficient to say that the corresponding trajectories intersect.}
	 
	  \setlength{\columnsep}{3pt}
   

    \fboxsep=0mm
    \begin{figure}
        \fbox{ \begin{minipage}[c]{0.23\textwidth}
             \includegraphics[width=1\linewidth,trim={5.9cm 3.1cm 1.3cm 1.9cm},clip]{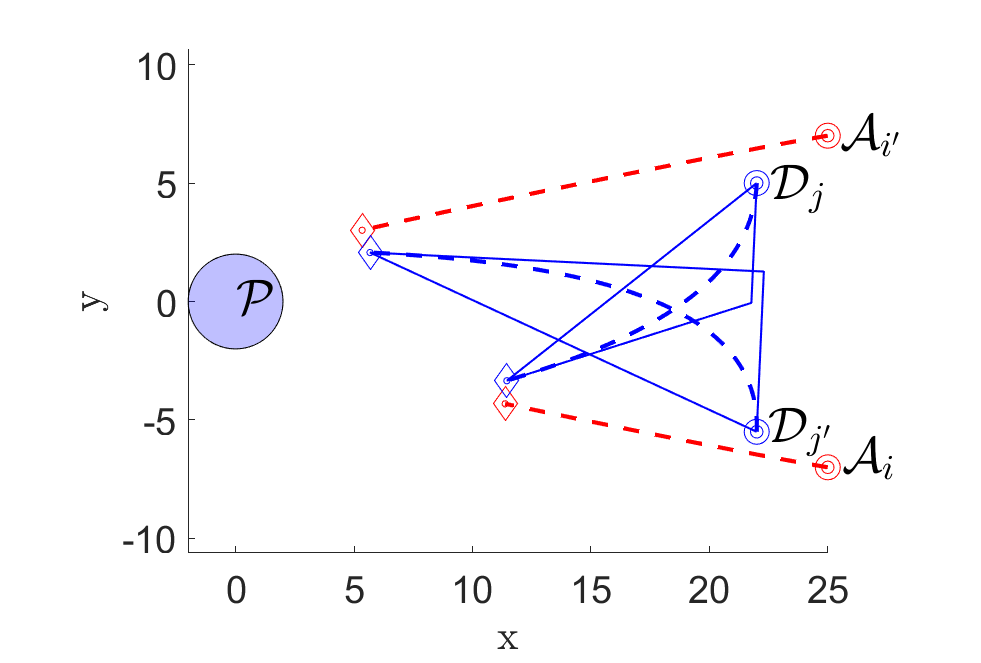}
         \end{minipage}
	\begin{minipage}[c]{0.23\textwidth}
	 	\caption*{The triangle enclosing a defender's trajectory is formed by the line segment joining the start and terminal position and the tangents to the defender's trajectory at the start and the terminal position.}
        \end{minipage}
        }
        \caption{Collision Check via Triangle Intersection. Circles: start positions, diamonds: terminal positions, dotted lines: actual trajectories of the players.}
        \label{fig:pathCollisionCheck}
    \end{figure}
        
	The \collisionTime function requires more time to run than \trianglesIntersect. Hence, using \trianglesIntersect as a qualitative check to decide if indeed collision occurs, and then using \collisionTime to find the time of collision, reduces on average the total computation time.
	{As shown in Table \ref{tab:computation_time_compare}, Algorithm \ref{alg:collision_times_heuristic} on average runs almost four times faster compared to Algorithm~\ref{alg:collision_time} for all possible pairs of trajectories.}
    
    \renewcommand{\arraystretch}{1.3}
 
    \begin{table}[ht]
        \centering
        \begin{tabular}{|c|c|c|c|c|}
        \hline
             $\downarrow$  Algorithm $ \| \; N_a \rightarrow$ & {15}  &  {20}  & {25}   & {30}    \\
             \hline
             {Algorithm~\ref{alg:collision_time}} & 222.1& 760.3 & 1901.4 & 3883.5\\[1mm]
            
             {Algorithm~\ref{alg:collision_times_heuristic}} & 61.0 & 208.3 & 502.3 & 1057.1 \\[1mm]
             \hline
        \end{tabular}
        \caption{ \vishnu{Computation time (s) to find collision times}}	\label{tab:computation_time_compare}
    \end{table}
	
	 The solution to \eqref{eq:defender_attackers_assign_MIQP} always yields an assignment for which defenders do not collide with each other if (i) all the players play optimally as per \eqref{eq:attackMinTimeProblem} and \eqref{eq:defendBest-against-worstMinTimeProblem}, and (ii) such an assignment exists. However, there exist some set of initial states of the players for which such a collision-free assignment does not exist. Furthermore, the attackers may play {time-sub-optimally}, in which case the defenders may collide with each other under the original assignment, irrespective of whether the original assignment was collision-free or not. In such cases, the time-optimal control action of the involved defenders is augmented by a control term that is based on a control barrier function (CBF). We provide a CBF-based quadratically constrained quadratic program (QCQP) to minimally augment the optimal control of the defenders in the following subsection.
	 
	 \subsection{{Control for inter-defender collision avoidance}} \label{sec:collision_avoidance_control}
	 {In this subsection, we consider the inter-defender collision avoidance. We propose a solution based on Exponential Control Barrier Functions (ECBFs) as explained below. We first formulate the QCQP that computes a control input that aims to preserve safety among the defenders while minimally deviating from their nominal control inputs. Then we present conditions under which this QCQP is feasible for all times. Finally we present a theoretical analysis on sufficient conditions that guarantee existence and uniqueness of the solution of the QCQP over a time interval (for the solution to be well-defined).}
	 
	  {Let the function $h_{jj'}(t): \calX \rightarrow \bR$ be defined as $$h_{jj'}(t) = (\rho_d^{col})^2-\norm{\mathbf{r}_{dj}(t)-\mathbf{r}_{dj'}(t)}^2, \; t\geq 0.$$} We require that $h_{jj'}(t) \le 0, \; \forall t\ge 0$ for all $j \neq j' \in I_d$ to ensure that the defenders do not collide with each other.  {In other words, we want the safe set $\calX_{safe}$ to be forward invariant, where $$\calX_{safe} = \{\mathbf{X}\in \calX \;|\; h_{jj'}(t)\le 0, \forall j,j' \in I_d, j\neq j'\},$$ where the combined state vector $\mathbf{X}=[\mathbf{x}_{d1}^T,\mathbf{x}_{d2}^T,...,\mathbf{x}_{dN_d}^T,\mathbf{x}_{a1}^T,\mathbf{x}_{a2}^T,...,\mathbf{x}_{aN_a}^T]^T$.} 
	  
	  {{A sufficient condition in order to render a set forward invariant is given in terms of a Control Barrier Function (CBF), while Quadratic programs (QPs) subject to CBF constraints have been proposed in recent literature to yield real-time implementable control solutions that can be used to augment nominal controllers at each time step for collision avoidance \cite{ames2016control, nguyen2016exponential, xiao2019control, wang2017safe}.}} In our problem, since the function $h_{jj'}(t)$ has relative degree 2 with respect to (w.r.t) the dynamics \eqref{eq:dampedDIDyn}, we resort to the class of exponential CBF (ECBF) \cite{nguyen2016exponential,wang2017safe} for collision avoidance. 
	  
	  For the ease of notation, we drop the argument $t$ in the following text. Let $\Delta \mathbf{u} = [\Delta \mathbf{u}_{d1}^T, \Delta \mathbf{u}_{d2}^T,..., \Delta \mathbf{u}_{dN_d}^T ]^T \in \bR^{1 \times 2N_d}$ be the input correction vector for inter-defender collision avoidance, where $\Delta \mathbf{u}_{dj}$ is the input correction to $\calD_j$, for all $j \in I_d$. 
	  {
	  Following \cite{nguyen2016exponential}, we define the following ECBF that is tailored to the application in this paper.}
	  {
	  \begin{definition}\label{def:ECBF}
	  [Exponential Control Barrier Function]
	  Consider a pair of defenders $\calD_j$ and $\calD_{j'}$, their dynamics given by \eqref{eq:dampedDIDyn} and the set $\calX_{0} = \{\mathbf{X}_{jj'} \in \calX_d^2|\; h_{jj'}\le 0\}$, where $\mathbf{X}_{jj'} =[\mathbf{x}_{dj}^T ,\mathbf{x}_{dj'}^T]^T$ and $h_{jj'}$ has relative degree 2 w.r.t. the system dynamics \eqref{eq:dampedDIDyn}. $h_{jj'}$ is an Exponential Control Barrier Function (ECBF) if there exists $k_{jj'} \in \bR$ such that,
	  \be
	  \inf_{\Delta \mathbf{u}_{dj}, \Delta \mathbf{u}_{dj'} } \ddot{h}_{jj'} + 2k_{jj'}\dot{h}_{jj'} + k_{jj'}^2 h_{jj'}  \le 0, \forall \mathbf{X}_{jj'}\in \calX_0
	  \ee
	  and $h_{jj'} \le C_{jj'} e^{A_{jj'}t}\eta_{jj'} \le 0$ when $\mathbf{X}_{jj'}(0)\in \calX_0$, where $A_{jj'} \in \bR^2$ depends on $k_{jj'}$, and
	  $C_{jj'}=[1,0]$, $\eta_{jj'} = [h_{jj'},\dot{h}_{jj'}]$.
	  \end{definition}
	  }
	 
	    Next, building on the ECBF-QP formulation in \cite{nguyen2016exponential, wang2017safe}, we develop the following ECBF-QCQP as we have quadratic constraint on the control inputs. 
	  {
	 \bse\label{eq:ECBF_QCQP}
	 \begin{align}
	 \ds \min_{\Delta \mathbf{u}} &\quad  \norm{\Delta \mathbf{u}}^2\\
	\text{s.t. } &\; 1)\;   \mathbf{A}_{jj'} \Delta \mathbf{u} - b_{jj'} \le 0, \; \forall j' \in I_d, \quad \forall j < j'; \label{eq:ECBF_QCQP_constraint_1}\\
	&\; 2) \;  \norm{\mathbf{u}_{dj}^*+\Delta \mathbf{u}_{dj}}^2 - (\bar{u}_d)^2 \le 0, \quad \forall j \in I_d; \label{eq:ECBF_QCQP_constraint_2}
	 \end{align}
	 \ese
where $\mathbf{A}_{jj'} = [ 0,...,-2(\mathbf{r}_{dj}-\mathbf{r}_{dj'})^T,...,2(\mathbf{r}_{dj}-\mathbf{r}_{dj'})^T,...,0 ]  \in \bR^{1\times 2N_d}$, $b_{jj'} = -2k_{jj'}\dot{h}_{jj'} - k_{jj'}^2 h_{jj'} + 2 (\norm{\mathbf{v}_{dj}-\mathbf{v}_{dj'}}^2 + (\mathbf{r}_{dj}-\mathbf{r}_{dj'})^T(\mathbf{u}_{dj}^*-C_D \mathbf{v}_{dj} - \mathbf{u}_{dj'}^*+C_D \mathbf{v}_{dj'}))$.} {The constraints in \eqref{eq:ECBF_QCQP_constraint_1} are the safety constraints as established by ECBF (Definition~\ref{def:ECBF})} and those in \eqref{eq:ECBF_QCQP_constraint_2} are quadratic constraints on the control input. 
{The constant coefficients $k_{jj'} > 0, \; \forall j \neq j' \in I_d$, are chosen appropriately to ensure that the above QCQP is feasible.} {Below we explain how to choose the values of $k_{jj'}$}. 

{Let us define $\psi_{jj'}(t)$ as
\be
\psi_{jj'}(t)=\dot{h}_{jj'}(t) + k_{jj'}  h_{jj'}(t).
\ee
{To ensure forward invariance of the safe set of the defenders using ECBFs \cite{nguyen2016exponential,wang2017safe} the initial states of the defenders need to satisfy  Condition~\ref{cond:ECBF_initial_states}.}
\begin{condition} \label{cond:ECBF_initial_states}
1) $h_{jj'}(0) \le 0$, and 2) $\psi_{jj'}(0) \le 0 $, for all $j \neq j' \in I_d$.
\end{condition}
}

\vishnu{To ensure this, we make the following assumption. 
\begin{assumption}[QCQP Feasibility]\label{assum:qcqp_feasibility}
    For the initial states in $S_{jj'}(\rho_0) = \{\mathbf{x}_{dj}(0), \; \mathbf{x}_{dj'}(0) \in \bR^2\times \calB_{\bar{v}_d}| \norm{\mathbf{r}_{dj}(0)- \mathbf{r}_{dj'}(0)}>\rho_0\}$, the parameter $k_{jj'}$ is chosen depending on the initial conditions such that $\frac{4\bar{v}_d\rho_0}{\rho_0^2-(\rho_d^{col})^2}\le k_{jj'}$. Here $\rho_0 = 2\rho_d^b + \rho_d^{col}$, with $\rho_d^b = \frac{\bar{u}_d (1-\log(2))}{C_D^2}$ being the breaking distance traveled by a defender moving with the maximum velocity despite maximum acceleration applied opposite to its velocity and $\rho_d^{col}>0$ being the safety distance between the defenders.
\end{assumption}
Assumption~\ref{assum:qcqp_feasibility} ensures that $\psi_{jj'}(0) \le 0 $ in $S_{jj'}(\rho_0)$ such that Condition~\ref{cond:ECBF_initial_states} holds and QCQP is always feasible for all future times.} 


{
{
Next, we show that the solution to the ECBF-QCQP in \eqref{eq:ECBF_QCQP} is continuously differentiable (hence locally Lipschitz) almost everywhere in the set $\bar{\calX}_{safe}$ defined as
\be \label{eq:x_bar_safe}
\bar{\calX}_{safe} = \big \{\mathbf{X}\in \calX_{safe}| \psi_{jj'}(t) \le 0 ,
 \forall j \neq j' \in I_d \big\}.
\ee} We first present Lemma \ref{lem:measure_zero_set}.
\begin{lemma}\label{lem:measure_zero_set}
Let, without loss of generality, that the attackers $\calA_j$ and $\calA_{j'}$ are assigned to the defender $\calD_{j}$ and $\calD_{j'}$, respectively, and $\calX_2 = \calX_{a}^2 \times \calX_{d}^2$ be the configuration space of two defenders and two attackers. Then, the set $Q_0 = \{[\mathbf{x}_{aj}^T, \mathbf{x}_{aj'}^T, \mathbf{x}_{dj}^T, \mathbf{x}_{dj'}^T]^T \in \calX_2   | \;b_{jj'} = 0, \forall j \neq j' \in I_d \}$ is a measure zero set.
\end{lemma}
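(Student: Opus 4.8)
The plan is to show that, off a Lebesgue‑null subset of $\calX_2$, the scalar $b_{jj'}$ is continuously differentiable in the stacked state $\mathbf{z}:=[\mathbf{x}_{aj}^T,\mathbf{x}_{aj'}^T,\mathbf{x}_{dj}^T,\mathbf{x}_{dj'}^T]^T$ and that $0$ is a regular value of $b_{jj'}$ there; the regular value theorem then makes the level set $\{b_{jj'}=0\}$ a codimension‑one $C^1$ submanifold near every such point, hence null, while the discarded set is null by construction. (Equivalently, one can argue that $b_{jj'}$ is real‑analytic and not identically zero, and invoke the classical fact that the zero set of a nontrivial real‑analytic function is null; I would use whichever is cleaner at each step.)

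First I would write out the state dependence of $b_{jj'}$. With $h_{jj'}=(\rho_d^{col})^2-\norm{\mathbf{r}_{dj}-\mathbf{r}_{dj'}}^2$ and $\dot h_{jj'}=-2(\mathbf{r}_{dj}-\mathbf{r}_{dj'})^T(\mathbf{v}_{dj}-\mathbf{v}_{dj'})$, every term in $b_{jj'}$ is polynomial in the components of $\mathbf{x}_{dj},\mathbf{x}_{dj'}$ except for the occurrences of $\mathbf{u}_{dj}^*$ and $\mathbf{u}_{dj'}^*$. By construction $\mathbf{u}_{dj}^*=\bar{u}_d'[C(\theta_d^*),S(\theta_d^*)]^T$ depends only on $(\mathbf{x}_{dj},\mathbf{x}_{aj})$ through the transcendental system \eqref{eq:defendBest-against-worstMinTimeSolution}, and likewise $\mathbf{u}_{dj'}^*$ depends only on $(\mathbf{x}_{dj'},\mathbf{x}_{aj'})$. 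Since $E_1,E_2$ are real‑analytic, the right‑hand sides of \eqref{eq:defendBest-against-worstMinTimeSolution} are real‑analytic in $(\theta_a^*,t_a^{int},\theta_d^*,t_d^{int})$ and in the states, so by the analytic implicit function theorem, on the open set $U\subseteq\calX_2$ where the Jacobian of \eqref{eq:defendBest-against-worstMinTimeSolution} with respect to $(\theta_a^*,t_a^{int},\theta_d^*,t_d^{int})$ is nonsingular, the solution---hence $\mathbf{u}_{dj}^*,\mathbf{u}_{dj'}^*$, and therefore $b_{jj'}$---is real‑analytic in $\mathbf{z}$. (One may alternatively cite the known smoothness/analyticity of the time‑optimal control and value function for the isotropic‑rocket dynamics off their degeneracy locus, cf.\ \cite{bakolas2014optimal}.) The complement $\calX_2\setminus U$ lies in the zero set of an analytic determinant that is not identically zero, hence is null.

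Next I would establish regularity of the level set. Since $b_{jj'}$ depends on $\mathbf{r}_{aj'}$ only through $\mathbf{u}_{dj'}^*$, one has $\partial b_{jj'}/\partial\mathbf{r}_{aj'}=-2(\mathbf{r}_{dj}-\mathbf{r}_{dj'})^T\,\partial\mathbf{u}_{dj'}^*/\partial\mathbf{r}_{aj'}$, and because $\mathbf{u}_{dj'}^*=\bar{u}_d'[C(\theta_{dj'}^*),S(\theta_{dj'}^*)]^T$ factors through the single scalar $\theta_{dj'}^*$, the matrix $\partial\mathbf{u}_{dj'}^*/\partial\mathbf{r}_{aj'}$ has rank at most one with range spanned by the unit vector $[-S(\theta_{dj'}^*),C(\theta_{dj'}^*)]^T$ orthogonal to $\mathbf{u}_{dj'}^*$. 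Hence $\partial b_{jj'}/\partial\mathbf{r}_{aj'}$ vanishes only where $\mathbf{r}_{dj}-\mathbf{r}_{dj'}$ is parallel to $\mathbf{u}_{dj'}^*$ or where $\nabla_{\mathbf{r}_{aj'}}\theta_{dj'}^*=\mathbf{0}$; both loci are lower‑dimensional, hence null (the first is a single analytic equation not identically satisfied, and for the second, moving the target generically moves the optimal rendezvous point, hence the optimal thrust angle). Therefore $\nabla b_{jj'}\neq\mathbf{0}$ off a null set, the points of $\{b_{jj'}=0\}$ with nonvanishing gradient form a codimension‑one $C^1$ submanifold, the remaining points sit inside a null set, and together with $|\calX_2\setminus U|=0$ this yields $|Q_0|=0$. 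If instead the analyticity route is used, one only needs that $b_{jj'}\not\equiv 0$ on the connected set $\calX_2$: taking both defenders at rest and far apart gives $b_{jj'}=k_{jj'}^2\big(\norm{\mathbf{r}_{dj}-\mathbf{r}_{dj'}}^2-(\rho_d^{col})^2\big)+2(\mathbf{r}_{dj}-\mathbf{r}_{dj'})^T(\mathbf{u}_{dj}^*-\mathbf{u}_{dj'}^*)$, whose first term grows quadratically while the second grows only linearly in $\norm{\mathbf{r}_{dj}-\mathbf{r}_{dj'}}$, so $b_{jj'}>0$ there.

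The hard part will be making the regularity claim rigorous: $\mathbf{u}_{dj}^*$ is only implicitly defined by \eqref{eq:defendBest-against-worstMinTimeSolution}, so one must justify its differentiable dependence on the states via the implicit function theorem and, crucially, show that the degeneracy locus where this breaks down---coincidences in the bang--bang structure, non‑uniqueness of the time‑optimal control, and boundary cases such as the assigned attacker starting inside $\calP$---is contained in a measure‑zero set. Given that, the remaining steps (writing out the dependence of $b_{jj'}$ on $\mathbf{z}$, computing the single partial derivative $\partial b_{jj'}/\partial\mathbf{r}_{aj'}$, and applying the regular value theorem or the analytic‑zero‑set fact) are routine.
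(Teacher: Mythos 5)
Your strategy is sound but follows a genuinely different route from the paper. The paper's own proof fixes $\mathbf{x}_{aj},\mathbf{x}_{aj'},\mathbf{x}_{dj},\mathbf{r}_{dj'}$ and the speed $v$ of $\calD_{j'}$, and studies $b_{jj'}$ as a function of the single angle $\theta$ of $\mathbf{v}_{dj'}$: it splits $b_{jj'}=T_1+T_2$ with only $T_2(\theta)$ depending on $\theta$, argues that $T_2$ is a sum of weighted cosines vanishing at only finitely many $\theta\in[0,2\pi]$, concludes each circle slice meets $Q_0$ in a null set, and then takes the union over all slices. You instead work globally on $\calX_2$: you establish real-analyticity (or $C^1$ regularity) of $b_{jj'}$ off a null degeneracy locus via the implicit function theorem applied to \eqref{eq:defendBest-against-worstMinTimeSolution}, exhibit a large-separation configuration where $b_{jj'}>0$, and invoke either the analytic-zero-set fact or the regular value theorem through the explicit partial $\partial b_{jj'}/\partial\mathbf{r}_{aj'}$. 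Both arguments hinge on the same underlying issue---smooth/analytic dependence of the implicitly defined optimal control on the state---which the paper leaves implicit: its $T_2$ is not literally a finite trigonometric polynomial, since $\theta_{dj'}^*$ itself varies with $\theta$, so the ``finitely many zeros'' claim is really an analyticity-plus-nontriviality argument in disguise. Your global version buys two things: it avoids the paper's shaky final step of deducing nullity of $Q_0$ from an \emph{uncountable} union of null slices (which properly requires a Fubini/Tonelli argument over the slicing), and it makes the nontriviality of $b_{jj'}$ explicit rather than asserted. The price is the degeneracy-locus bookkeeping you flag yourself (non-uniqueness and non-smooth dependence of the time-optimal control, and---if the analytic route is used---checking $b_{jj'}\not\equiv 0$ on every connected component of the domain of analyticity, not just at one configuration); since the paper's slice argument quietly assumes the same regularity, completing either proof requires essentially the same missing lemma.
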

\begin{proof}
			To show that $Q_0 \subset \calX_2$ is a measure zero set, we first consider a subset of $\calX_2$ in which we fix the states $\mathbf{x}_{dj}$, $\mathbf{x}_{aj}$, $\mathbf{x}_{aj'}$ and position $\mathbf{r}_{dj'}$ of $\calD_j$, $\calA_j$, $\calA_{j'}$ and $\calD_{j'}$ respectively, and vary the velocity $\mathbf{v}_{dj'}$ of $\calD_{j'}$ such that $\mathbf{v}_{dj'} =v[C(\theta), S(\theta)]^T$ for fixed value of $v$ and different values of $\theta$ (see \vishnu{Fig.~}\ref{fig:measureZeroFig}), i.e., we fix the speed of $\calD_{j'}$ and vary the direction of its velocity vector.

        \fboxsep=0mm
        \begin{figure}
             \includegraphics[width=0.45\textwidth,trim={0cm 2.33cm 0cm 2.1cm},clip]{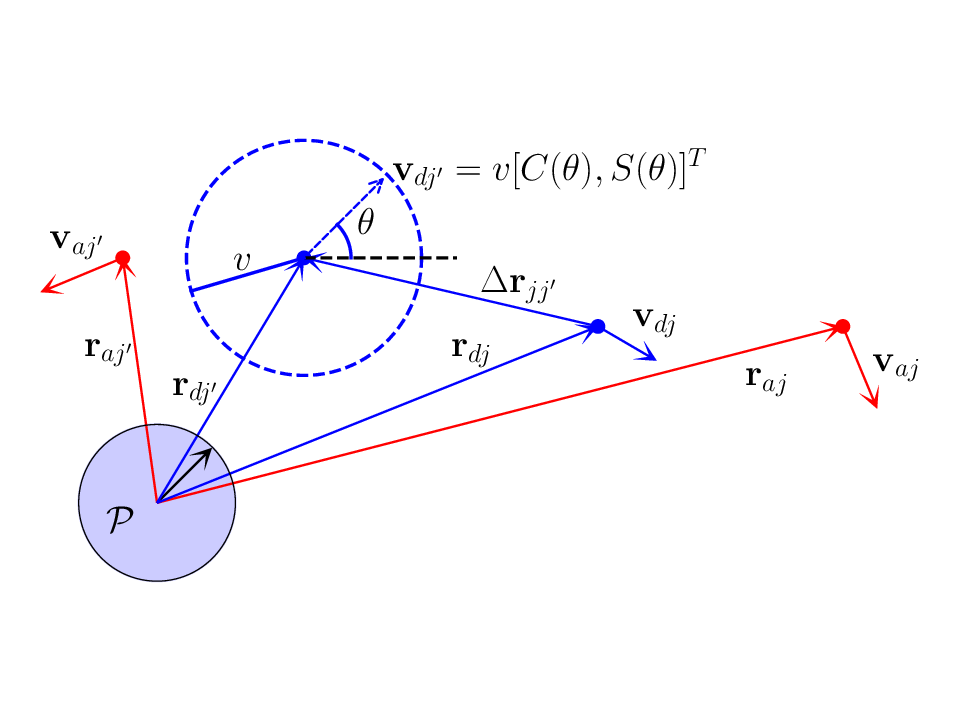}
	 	\caption{\vishnu{Visual of set $\bar{\calX}_2(\mathbf{x}_{aj}, \mathbf{x}_{aj'}, $ $ \mathbf{x}_{dj}, \mathbf{r}_{dj'},v)$ (the dotted circle). Solid vectors denote the fixed vectors and the dotted one denotes the variable vector.}}
        \label{fig:measureZeroFig}       
        \end{figure}
        
			We denote this subset of $\calX_2$ as $\bar{\calX}_2(\mathbf{x}_{aj},\mathbf{x}_{aj'},\mathbf{x}_{dj}, \mathbf{r}_{dj'}, v) = \bar{\calX}_2 = \{\mathbf{v}_{dj'} \in \calB_{\bar{v}_d} | \norm{\mathbf{v}_{dj'}} = v \}$. Now we want to find for what values of $\theta$, we have $b_{jj'} = 0$ on the set $\bar{\calX}_2(\mathbf{x}_{aj},\mathbf{x}_{aj'},\mathbf{x}_{dj}, \mathbf{r}_{dj'}, v)$. On the set $\bar{\calX}_2(\mathbf{x}_{aj},\mathbf{x}_{aj'},\mathbf{x}_{dj}, \mathbf{r}_{dj'}, v)$, we have:
		\be \label{eq:bjj}
		\baa
		b_{jj'} &\hspace{-2.5mm}=  -2k_{jj'}\dot{h}_{jj'} - k_{jj'}^2 h_{jj'} + 2 (\norm{\mathbf{v}_{dj}-\mathbf{v}_{dj'}}^2 +\\ 
		&\hspace{-2.5mm} \quad (\Delta \mathbf{r}_{jj'})^T(\mathbf{u}_{dj}^*-C_D \mathbf{v}_{dj} - \mathbf{u}_{dj'}^*+C_D \mathbf{v}_{dj'}))\\
		&\hspace{-2.5mm}= \underbrace{ - k_{jj'}^2 h_{jj'} + 2(\Delta \mathbf{r}_{jj'}^T(\mathbf{u}_{dj}^*-C_{k_{jj'}} \mathbf{v}_{dj} ) \norm{\mathbf{v}_{dj}}^2+v^2)}_{T_1}\\
		& \hspace{-2.5mm}\quad \underbrace{-2 \big(\Delta \mathbf{r}_{jj'}^T(\bar{u}_d'\hat{\mathbf{o}}(\theta_{dj'}^*)-C_{k_{jj'}} v \hat{\mathbf{o}}(\theta)) + v \mathbf{v}_{dj'}^T\hat{\mathbf{o}}(\theta) \big)}_{T_2} 
		\eaa
		\ee
		where $\Delta \mathbf{r}_{jj'} = \mathbf{r}_{dj}-\mathbf{r}_{dj'}$, $\hat{\mathbf{o}}(\theta) = [C(\theta), S(\theta)]^T$, $C_{k_{jj'}} = C_D-k_{jj'}$ and $\theta_{dj'}^*$ is dependent on $\theta$. On $\bar{\calX}_2(\mathbf{x}_{aj},\mathbf{x}_{aj'},\mathbf{x}_{dj}, \mathbf{r}_{dj'}, v)$ the term $T_2$ in \eqref{eq:bjj} depends on $\theta$ while $T_1$ does not. The term $T_2$ can be further simplified as:
		\be \label{eq:T2}
		\baa
		T_2 & \hspace{-2.5mm}= -2 \big(\Delta \mathbf{r}_{jj'}^T(\bar{u}_d'\hat{\mathbf{o}}(\theta_{dj'}^*)-C_{k_{jj'}} v \hat{\mathbf{o}}(\theta)) + v \mathbf{v}_{dj'}^T\hat{\mathbf{o}}(\theta) \big)\\
		&\hspace{-2.5mm} = -2 \big(\norm{\Delta \mathbf{r}_{jj'}}(\bar{u}_d'\cos(\theta_{ru}+\theta_{dj'}^*)-C_{k_{jj'}} v \cos(\theta_{rv}+\theta)) \\
		&\hspace{-2.5mm} \quad + v \norm{\mathbf{v}_{dj'}}\cos(\theta_{vv}+\theta) \big)
		\eaa
		\ee
		where $\theta_{ru}$,  $\theta_{rv}$,  $\theta_{vv}$ are the angles between vectors $\Delta \mathbf{r}_{jj'}$ and $\mathbf{u}_{dj'}^*$, $\Delta \mathbf{r}_{jj'}$ and $\mathbf{v}_{dj'}$, and vectors $\mathbf{v}_{dj}$ and $\mathbf{v}_{dj'}$, respectively, when $\theta=0$. From \eqref{eq:T2} we observe that $T_2$ is a sum of weighted cosines that depend on $\theta$. {This means $T_2$ can be 0 only for a countably finite number of values of $\theta$ in the set $[0,2\pi]$}. This implies that $b_{jj'}$ can be 0 on $\bar{\calX}_2(\mathbf{x}_{aj},\mathbf{x}_{aj'},\mathbf{x}_{dj}, \mathbf{r}_{dj'}, v)$ only for a countably finite number of values of $\theta$. Let this set of values of $\theta$ be denoted as $\bar{Q}_0$. Then we have that the set $\bar{Q}_0 \subset  \bar{\calX}_2(\mathbf{x}_{aj},\mathbf{x}_{aj'},\mathbf{x}_{dj}, \mathbf{r}_{dj'}, v)$ is a measure zero subset of $\bar{\calX}_2(\mathbf{x}_{aj},\mathbf{x}_{aj'},\mathbf{x}_{dj}, \mathbf{r}_{dj'}, v)$. Next, we have that the configuration space $\calX_2$ of $\calD_j$, $\calD_{j'}$, $\calA_j$, $\calA_{j'}$ is a union of all possible disjoint sets $\bar{\calX}_2(\mathbf{x}_{aj},\mathbf{x}_{aj'},\mathbf{x}_{dj}, \mathbf{r}_{dj'}, v)$. This implies that the set $Q_0$ is exactly the union of all possible subsets $\bar{Q}_0$, i.e., $Q_0= \cup \bar{Q}_0$. Since each $\bar{Q}_0$ is a measure zero subset, we have that the set $Q_0$ is also a measure zero set \cite{bogachev2007measure}. 
\end{proof}
}

{
Next, we make the following assumption.
{
\begin{assumption} \label{assum:num_active_constraints}
The total number of active constraints in \eqref{eq:ECBF_QCQP} at the nominal control $\mathbf{u}_{dj}^*$ (i.e., at $\Delta \mathbf{u}_{dj}=\mathbf{0}$) for all $j \in I_d$,  at any time is less then $2N_d$.
\end{assumption}
Assumption~\ref{assum:num_active_constraints} ensures that the gradients of the active constraints at the nominal control $\mathbf{u}_{dj}^*$, for all $j \in I_d$, are linearly independent. The constraints in \eqref{eq:ECBF_QCQP_constraint_2} can never be active at the nominal control because $\norm{\mathbf{u}_{dj}^*} = \bar{u}_d' = \bar{u}_d-\epsilon_d$. Hence the Assumption~\ref{assum:num_active_constraints} practically implies that only any $N_d$ pairs of the defenders have their corresponding safety constraint in \eqref{eq:ECBF_QCQP_constraint_1} active under the nominal control (i.e., at $\Delta \mathbf{u}_{dj} = \mathbf{0}$). In other words, Assumption~\ref{assum:num_active_constraints} enforces that maximum $N_d$ pairs of defenders could possibly be on a collision course.}

{Next, in Theorem~\ref{thm:qcqp_solution_continuity} we prove that the solution to the ECBF-QCQP \eqref{eq:ECBF_QCQP} is continuously differentiable almost everywhere.} 
		\begin{theorem}\label{thm:qcqp_solution_continuity}
		{
		\vishnu{Let Assumption \ref{assum:qcqp_feasibility} and \ref{assum:num_active_constraints} hold, then} the solution to the ECBF-QCQP in \eqref{eq:ECBF_QCQP} is continuously differentiable (and hence locally Lipschitz) on $\bar{\calX}_{safe} \backslash Q_0$.}
		\end{theorem}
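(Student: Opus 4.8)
The plan is to treat \eqref{eq:ECBF_QCQP} as a parametric convex program whose data depend smoothly on the state $\mathbf{X}$, and to invoke the classical sensitivity theory for parametric nonlinear programs (the Basic Sensitivity Theorem, cf.\ Fiacco): if at a point the unique primal minimizer, together with a Lagrange multiplier, satisfies (a)~the linear independence constraint qualification (LICQ), (b)~strict complementary slackness, and (c)~the second-order sufficient conditions, and the problem data are $C^1$ near that point, then the primal--dual KKT pair is $C^1$ locally. First I would establish well-posedness: $\norm{\Delta \mathbf{u}}^2$ is strictly convex and the feasible set is a closed convex set (the intersection of the half-spaces \eqref{eq:ECBF_QCQP_constraint_1} and the Euclidean balls \eqref{eq:ECBF_QCQP_constraint_2}) which is nonempty on $\bar{\calX}_{safe}$ by the choice of the gains $k_{jj'}$ discussed above, so the minimizer $\Delta \mathbf{u}^*(\mathbf{X})$ is unique and the KKT conditions are necessary and sufficient. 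Moreover $\mathbf{A}_{jj'}$ and $b_{jj'}$ are smooth in the positions, velocities, and nominal controls $\mathbf{u}^*_{dj}$, and the $\mathbf{u}^*_{dj}$ are themselves $C^1$ in $\mathbf{X}$ wherever the interception equations \eqref{eq:defendBest-against-worstMinTimeSolution} define them via the implicit function theorem (regularity inherited from the time-optimal solution of \cite{bakolas2014optimal}); hence the problem data are $C^1$ on the set in question.

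Condition~(c) is immediate: the Hessian of the objective is $2\mathbf{I}\succ 0$ and the ball constraints are convex, so for nonnegative multipliers the Hessian of the Lagrangian is uniformly positive definite and the strong second-order sufficient conditions hold. For~(a), Assumption~\ref{assum:num_active_constraints} guarantees that fewer than $2N_d$ constraints are active at the nominal control $\Delta \mathbf{u}=\mathbf{0}$ --- the balls \eqref{eq:ECBF_QCQP_constraint_2} being never active there, since $\norm{\mathbf{u}^*_{dj}}=\bar{u}_d'=\bar{u}_d-\epsilon_d<\bar{u}_d$ --- and, as remarked after that assumption, the corresponding gradients $\mathbf{A}_{jj'}$ are then linearly independent, so LICQ holds and the multipliers are unique; an active ball constraint, when it must be included, has a nonzero gradient and does not destroy independence. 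It remains to verify~(b), and this is where $Q_0$ enters: a linear constraint in \eqref{eq:ECBF_QCQP_constraint_1} is active at the nominal control exactly when $b_{jj'}=0$ (because $\mathbf{A}_{jj'}\mathbf{0}-b_{jj'}=-b_{jj'}$), which is precisely the degenerate configuration deleted by removing $Q_0$. Off $Q_0$ every $b_{jj'}\neq 0$: if $b_{jj'}>0$ the nominal control strictly satisfies that constraint, whereas if $b_{jj'}<0$ it strictly violates it, so the minimizer is pushed onto the constraint with a \emph{strictly positive} multiplier (a zero multiplier would make $\Delta \mathbf{u}^*$ optimal with that constraint removed, hence feasible for it, contradicting $b_{jj'}<0$ at $\Delta \mathbf{u}=\mathbf{0}$ together with strict convexity). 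Thus strict complementary slackness holds on $\bar{\calX}_{safe}\backslash Q_0$, and the sensitivity theorem yields that $\Delta \mathbf{u}^*(\cdot)$ is $C^1$, hence locally Lipschitz, there; together with Lemma~\ref{lem:measure_zero_set} (which shows $Q_0$ has measure zero) this also gives differentiability almost everywhere on $\bar{\calX}_{safe}$.

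The step I expect to be the main obstacle is the strict-complementarity verification, because it needs a careful accounting of \emph{which} constraints are active at the optimum $\Delta \mathbf{u}^*$ rather than at the nominal control $\Delta \mathbf{u}=\mathbf{0}$ on which Assumption~\ref{assum:num_active_constraints} is phrased: one must rule out a constraint that is inactive at $\Delta \mathbf{u}=\mathbf{0}$ (i.e.\ $b_{jj'}>0$) yet active at $\Delta \mathbf{u}^*$ with a vanishing multiplier, and argue that the only way an active constraint can acquire a zero multiplier is the borderline case $b_{jj'}=0$ captured by $Q_0$; the sign structure of $b_{jj'}$, strict convexity of the objective, and LICQ together close this, but the bookkeeping of active sets is delicate and is the heart of the argument. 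A secondary technical point worth making explicit is the $C^1$ dependence of the nominal controls $\mathbf{u}^*_{dj}$ on $\mathbf{X}$, since it underpins the smoothness of the QCQP data and is not itself trivial.
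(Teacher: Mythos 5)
Your proposal follows essentially the same route as the paper's proof: both invoke Fiacco's sensitivity theorem (Theorem 2.1 of \cite{fiacco1976sensitivity}) after verifying smoothness of the data, LICQ via Assumption~\ref{assum:num_active_constraints}, positive definiteness of the Hessian of the Lagrangian, and strict complementary slackness failing only on the set $Q_0$ where $b_{jj'}=0$, with Lemma~\ref{lem:measure_zero_set} supplying that $Q_0$ has measure zero. Your discussion of \emph{why} strict complementarity holds off $Q_0$ (the sign analysis of $b_{jj'}$ and the active-set bookkeeping at $\Delta\mathbf{u}^*$ versus at $\Delta\mathbf{u}=\mathbf{0}$) is in fact more explicit than the paper's, which simply asserts this step.
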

		}
		\begin{proof}
			We observe that: 1) the objective function and the constraints defining the QCQP \eqref{eq:ECBF_QCQP} 
			are twice continuously differentiable functions of $\Delta \mathbf{u}$ and $\mathbf{U}=(\mathbf{x}_{d1},\mathbf{x}_{d2}, ... , \mathbf{x}_{dN_d}, \mathbf{u}_{d1}^*, \mathbf{u}_{d2}^*, ..., \mathbf{u}_{dN_d}^* )$, 2) the gradients of the active constraints with respect to $\Delta \mathbf{u}$ are linearly independent under Assumption \ref{assum:num_active_constraints}, 3) the Hessian of the Lagrangian, $H = \mathbf{I}_{2N_d} + \sum_{l=1}^{N_c} \mu_l \mathbf{I}_{2N_d} $, is positive definite, where $\mathbf{I}_{2N_d}$ is the identity matrix in $\bR^{2N_d \times 2N_d}$ and $\mu_l \ge 0$ are Lagrangian multipliers. Furthermore, since we use $\bar{u}_d' = \bar{u}_d-\epsilon_d$ to obtain the nominal time-optimal control $\mathbf{u}_{dj}^*$, the constraints in \eqref{eq:ECBF_QCQP_constraint_2} will never be active at the nominal control $\mathbf{u}_{dj}^*$ (i.e., at $\Delta \mathbf{u} = \bm{0}$). This implies that the strict complementary slackness condition holds for \eqref{eq:ECBF_QCQP} almost everywhere except for a set $Q_0 = \{[\mathbf{x}_{aj}^T, \mathbf{x}_{aj'}^T, \mathbf{x}_{dj}^T, \mathbf{x}_{dj'}^T]^T \in \calX_2   | ; b_{jj'} = 0, \forall j \neq j' \in I_d \}$. From Lemma \ref{lem:measure_zero_set} we have that $Q_0$ is a measure zero set.
			Then, using the Theorem 2.1 in \cite{fiacco1976sensitivity}, the solution to \eqref{eq:ECBF_QCQP} is shown to be a { continuously differentiable function of $\mathbf{U}$ in $\bar{\calX}_{safe} \backslash Q_0$ and hence the solution is also locally Lipschitz continuous in $\bar{\calX}_{safe} \backslash Q_0$.}

		\end{proof}
		
		{
		At this point, we would like to note that Theorem~\ref{thm:qcqp_solution_continuity} establishes only the local Lipschitz continuity of the solution to ECBF-QCQP \eqref{eq:ECBF_QCQP} on the set $\bar{\calX}_{safe}\backslash Q_0$, and hence the existence and uniqueness of the defenders' trajectories on $\bar{\calX}_{safe}\backslash Q_0$ only up to a maximal time interval $[0,t_{max})$, where $t_{max}>0$ depends on the initial state $\mathbf{X}(0)$ of the players. However, proving forward invariance of the set $\bar{\calX}_{safe}$ using tools such as Nagumo's Theorem \cite{blanchini2008set} (commonly used in CBF community) assumes the existence and uniqueness of the system trajectories \emph{for all} future times. Since the strict complementary slackness may not hold on the set $Q_0$, one can not apply the existence tools in \cite{fiacco1976sensitivity,tam1999continuity} to establish the Lipschitz continuity properties of the solution to the ECBF-QCQP. In addition, the set $Q_0$ may contain system states under which two defenders are in a deadlock, i.e., states that correspond to defenders no longer moving. The conditions that lead to deadlocks are dictated by the motion of the attackers and the corresponding nominal optimal interception control \eqref{eq:defendBest-against-worstMinTimeSolution} of the defenders. Since the attackers' control strategy is not known a priori, commenting on whether/when these types of behavior may occur is again a challenging task that goes beyond the scope of the current paper.
		
		In summary, the focus of the current paper is the collision-aware defender-to-attacker assignment (CADAA) strategy. The ECBF-based collision avoidance control was chosen to complement the collision-aware interception strategy because it allows us to minimally augment the nominal optimal interception control only when needed. As mentioned earlier, other collision avoidance methods can be chosen as well. The safety analysis of the multi-agent system using the ECBF-based collision avoidance control is in general a challenging problem and will be investigated in future work.
		}

\section{Simulation Results}\label{sec:simulations}
In this section, {we provide MATLAB simulations for various scenarios to demonstrate the effectiveness of inter-defender collision-aware interception strategy (IDCAIS), the overall defense strategy that combines CADAA and ECBF-QCQP}. { The proposed CADAA algorithm is compared to a baseline Collision-Unaware Defender-to-Attacker Assignment (CUDAA) algorithm, which is designed using only the time-to-capture term in the optimization cost and solved using the Hungarian Algorithm \cite{kuhn1955hungarian}}. Some key parameters used in the simulations are: $\rho_a=\rho_d=0.5 \hspace{.5mm} m$, $C_D=0.5$, $\bar{v}_a=6 \hspace{.5mm} ms^{-1} \;(\bar{u}_a=3 \hspace{.5mm}  ms^{-2})$, $\bar{v}_d=6.8 \hspace{.5mm}  ms^{-1} \; (\bar{u}_d = 3.4 \hspace{.5mm} ms^{-2})$, $\rho_{d}^{int}=1 \hspace{.5mm} m$, $\rho_{d}^{col}=2 \hspace{.5mm} m$, $\mathbf{r}_p=[0,0]^T$, $\rho_p=2 \hspace{.5mm} m$ {and $w=0.5$}.

{Note that the case studies below  involve small-scale scenarios of up to 4 defenders against the same number of attackers so that all the resulting trajectories are legible in the provided plots, and so that both the cases of (i) attackers playing the time-optimal controller and (ii) attackers not playing the time-optimal controller are clearly illustrated and analyzed. The video of these simulations and additional simulations of up to 16 players on each team (discussed in Section \ref{sec:additional_simulations}) can be found at \href{https://tinyurl.com/yduuruff}{https://tinyurl.com/yduuruff}. }

\subsection{{Attackers use time-optimal control}} 
Consider {Scenario 1} with two defenders (in blue) and two attackers (in red) as shown in \vishnu{Fig.~}\ref{fig:assign_compare}(a) both operating under their time-optimal control actions with no active collision-avoidance control on the defenders. The optimal paths of the players are shown in solid when CADAA is used to assign defenders to the attackers, while for the same initial conditions the dotted paths correspond to the collision-unaware defender-to-attacker assignment (CUDAA). Here CUDAA is obtained based only on the linear term in the objective in \eqref{eq:defender_attackers_assign_MIQP}, which attempts to minimize the sum of the times of interception expected to be taken by the defenders. CUDAA results into $\calD_1 \rightarrow \calA_1$ and $\calD_2 \rightarrow \calA_2$ assignment, while CADAA results into $\calD_1 \rightarrow \calA_2$ and $\calD_2 \rightarrow \calA_1$ assignment. In \vishnu{Fig.~}\ref{fig:assign_compare_dist}(a), the distance between the defenders is shown for CADAA (solid) and CUDAA (dashed). The dotted pink line in \vishnu{Fig.~}\ref{fig:assign_compare_dist}(a) and subsequent figures denote the minimum safety distance $\rho_{d}^{col}$ between the defenders. Note that the defenders collide with each other under CUDAA, which does not account for the possible future collisions among the defenders. On the other hand, CADAA helps avoid the impending collision between the defenders at the cost of delayed interception of the attackers.

In \vishnu{Fig.~}\ref{fig:assign_compare}(b), {Scenario 2} is shown where the CADAA is unable to prevent the collision between the defenders. This is evident from the plot of the distance between the defenders in both cases (\vishnu{Fig.~}\ref{fig:assign_compare_dist}(b)). The CADAA only helps in delaying the collision between the defenders. 

\begin{figure}[ht]
\centering
\begin{subfigure}[ht]{0.95\linewidth}
\includegraphics[width=1\linewidth,trim={1cm 0.45cm 0.7cm 1.2cm},clip]
{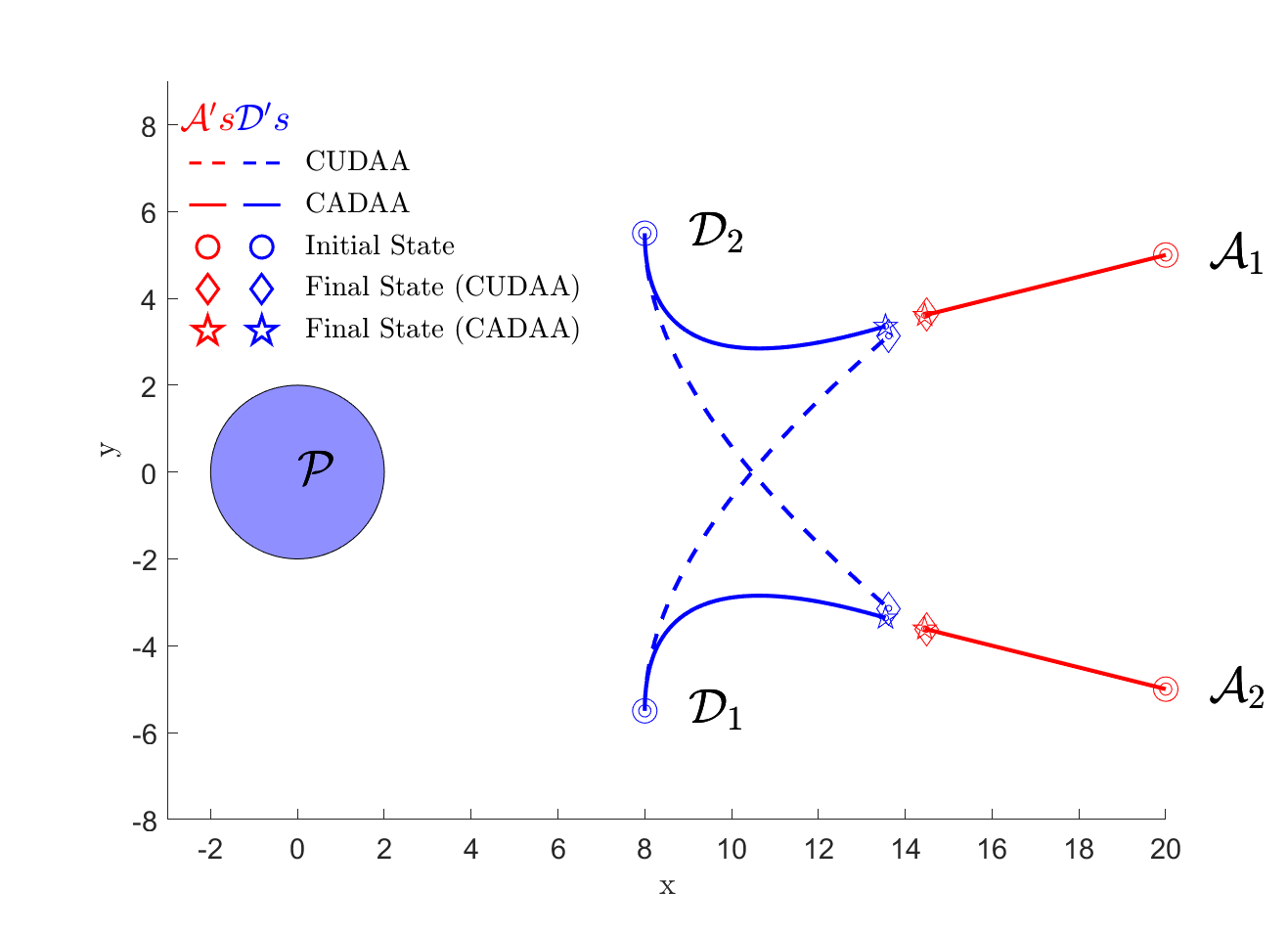}
	\caption{{Scenario 1: CADAA helps avoid inter-defender collision}}
\label{fig:assign_compare_no_collision_def_success}
\end{subfigure}
\begin{subfigure}[ht]{0.95\linewidth}
\includegraphics[width=1\linewidth,trim={1cm 0.35cm 1.2cm 1cm},clip]{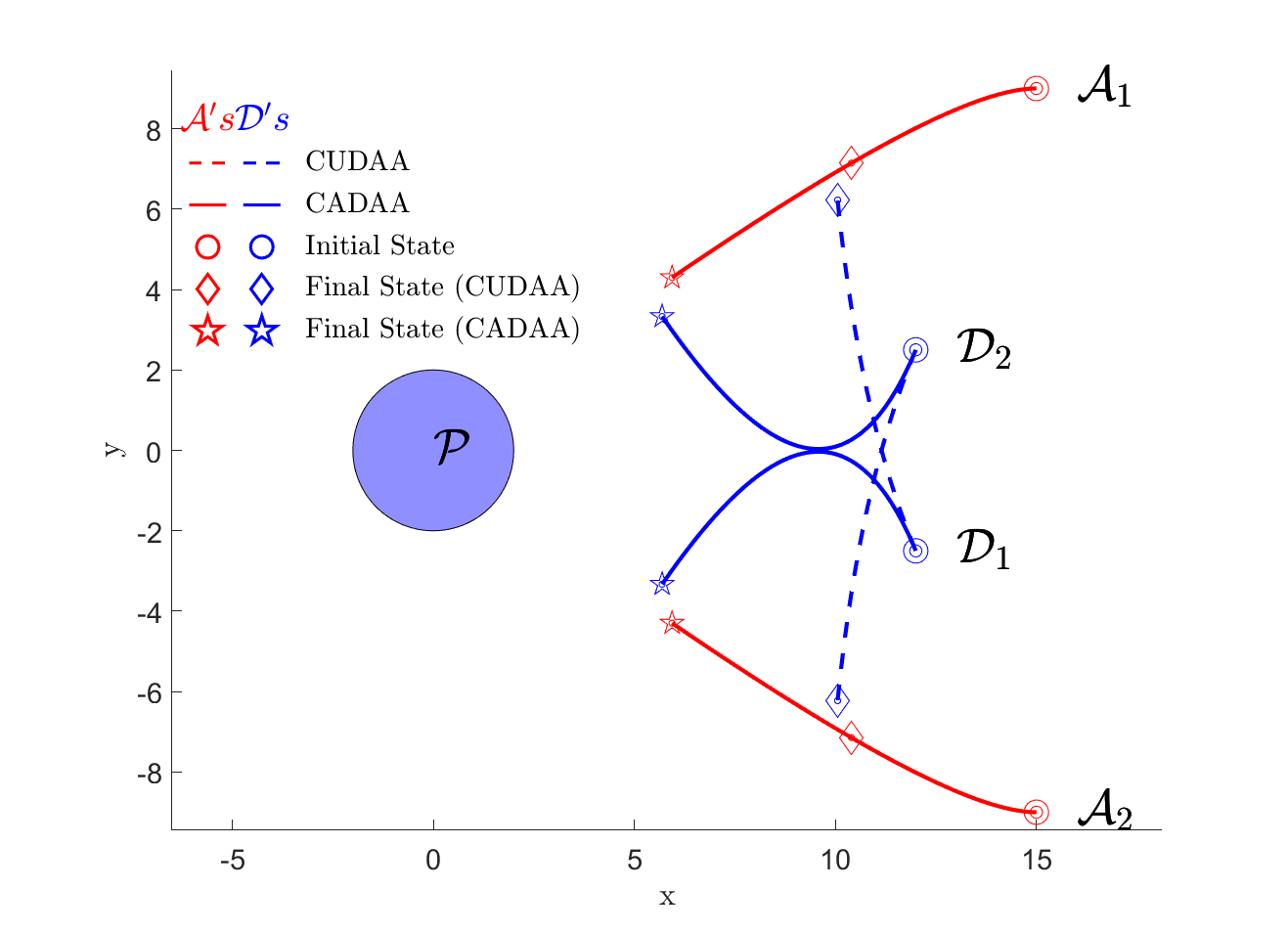}
	\caption{{Scenario 2: CADAA only delays inter-defender collision}}
\label{fig:assign_compare_collision_def_success}
\end{subfigure}
\caption{{Trajectories of the players under CADAA and CUDAA (blue: defenders, red: attackers)}}
\label{fig:assign_compare}
\end{figure}

\begin{figure}
  \centering
  \begin{subfigure}[ht]{0.85\linewidth}
\includegraphics[width=1\linewidth,trim={0.2cm .1cm 0cm 0cm},clip]{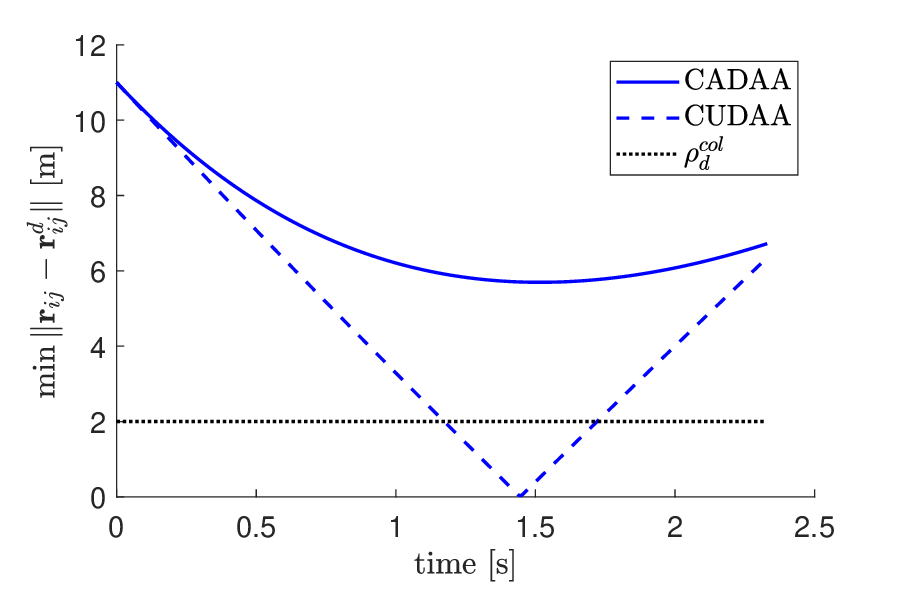}
	\caption{{Minimum distance between the defenders for the trajectories in Scenario 1}}
\label{fig:assign_compare_no_collision_def_success_dist}
\end{subfigure}
\begin{subfigure}[ht]{0.85\linewidth}
\includegraphics[width=1\linewidth,trim={0.2cm .1cm 0cm 0cm},clip]{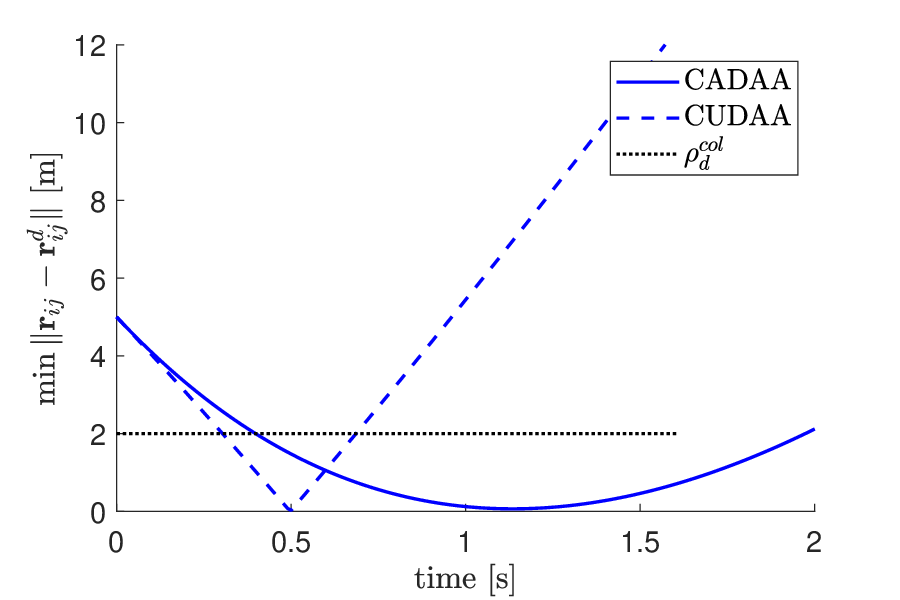}
	\caption{{Minimum distance between the defenders for the trajectories in Scenario 2}}
\label{fig:assign_compare_collision_def_success_dist}
\end{subfigure}
\caption{\vishnu{Minimum distance between defenders under CADAA and CUDAA}}
\label{fig:assign_compare_dist}
\end{figure}

Since it is impossible to avoid collisions in some cases even after accounting for future collisions during the assignment stage, in \vishnu{Fig.~}\ref{fig:collision_avoid_compare}, we show how the ECBF-QCQP based correction helps defenders avoid collisions with each other. In \vishnu{Fig.~}\ref{fig:collision_avoid_compare}(a) {Scenario 3} is shown where despite CADAA the defenders would collide (dashed paths), but after the ECBF-QCQP corrections the defenders avoid collisions (solid paths in \vishnu{Fig.~}\ref{fig:collision_avoid_compare}(a), see also distances in \vishnu{Fig.~}\ref{fig:collision_avoid_compare_dist}(a)) and are also able to intercept the attackers before the attackers reach the protected area. Due to active collision avoidance among the defenders, the interception of the attackers is delayed compared to the case when there is no active collision avoidance used by the defenders. This may result into scenarios where the defenders engage in collision avoidance but fail to intercept the attackers. An example of such scenarios is shown in \ref{fig:collision_avoid_compare}(b), {Scenario 4}, where the defenders avoid collision with each other (see \vishnu{Fig.~}\ref{fig:collision_avoid_compare_dist}(b) for distances) but they are unable to intercept both the attackers in time, attacker $\calA_1$ is able to reach the protected area before the defender $\calD_2$ could intercept it (see \vishnu{Fig.~}\ref{fig:collision_avoid_compare_dist}(b)).
\begin{figure}[ht]
\centering
\begin{subfigure}[ht]{0.98\linewidth}
\includegraphics[width=1\linewidth,trim={1.2cm 2.3cm 0.8cm 1.5cm},clip]{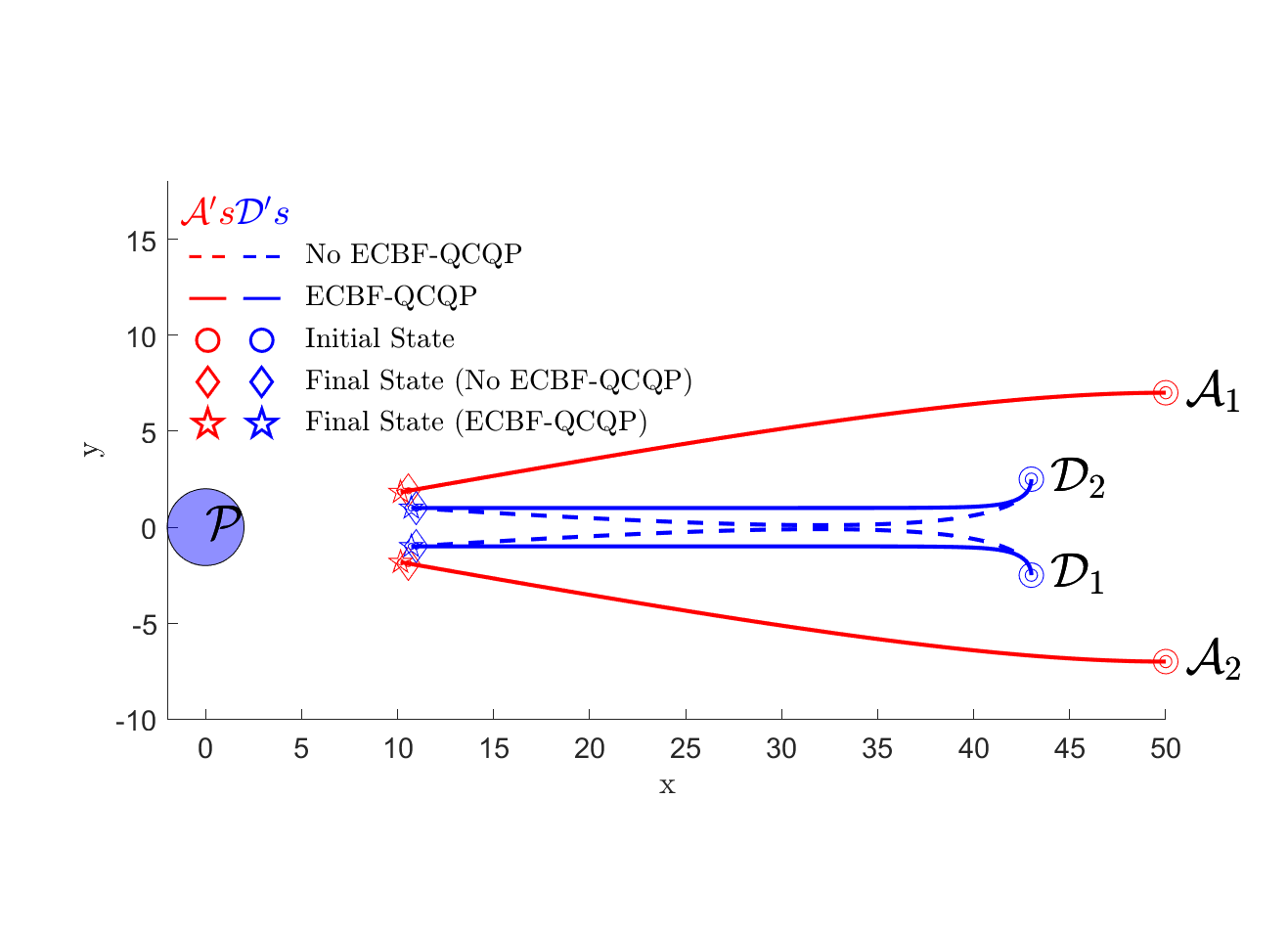}
	\caption{{Scenario 3: Defenders avoid collision with each other and also capture the attackers}}
\label{fig:collision_avoid_compare_no_collision_def_success}
\end{subfigure}
\begin{subfigure}[ht]{0.98\linewidth}
\includegraphics[width=1\linewidth,trim={1.2cm 1.0cm 0cm 1.95cm},clip]{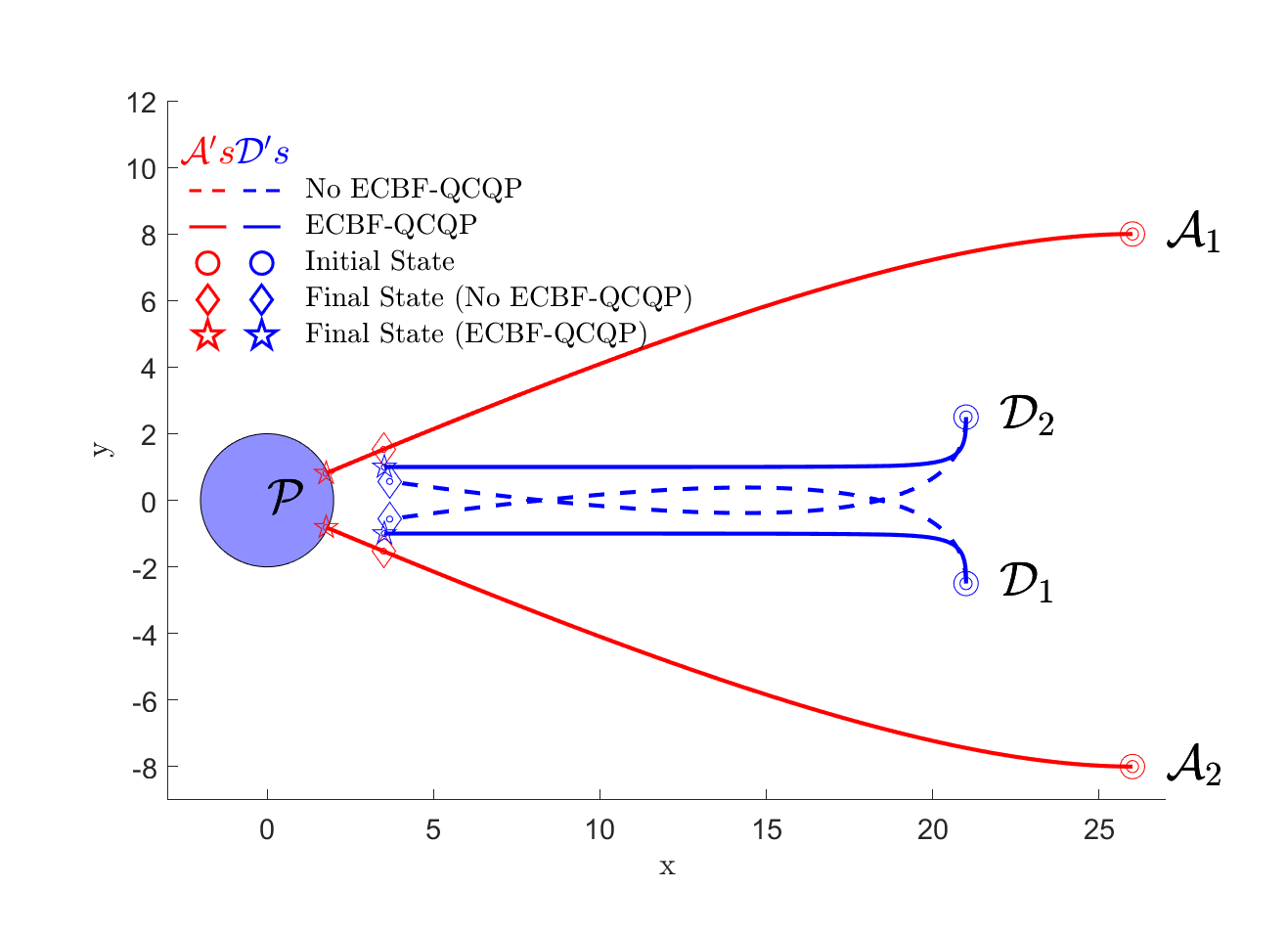}
\caption{{Scenario 4: Defenders avoid collision with each other but fail to capture the attackers}}
\label{fig:collision_avoid_compare_no_collision_def_fail}
\end{subfigure}
\caption{{Trajectories of the players under CADAA with collision avoidance control (ECBF-QCQP), i.e., IDCAIS, and without collision avoidance control (no ECBF-QCQP) (blue: defenders, red: attackers)}}
\label{fig:collision_avoid_compare}
\end{figure}

\begin{figure} 
  \centering
\begin{subfigure}[ht]{0.8\linewidth}
\includegraphics[width=1\linewidth,trim={0.5cm .0cm 0.9cm .2cm},clip]{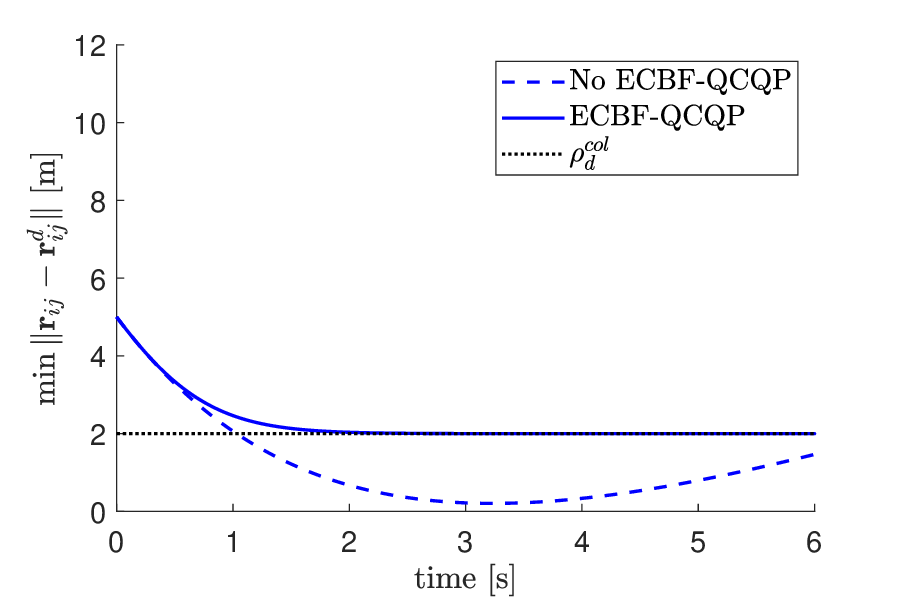}
	\caption{{Minimum distance between the defenders for the trajectories in Scenario 3)}}
\label{fig:collision_avoid_compare_no_collision_def_success_distDD}
\end{subfigure}
\begin{subfigure}[ht]{0.8\linewidth}
\includegraphics[width=1\linewidth,trim={0.5cm .0cm 1cm .0cm},clip]{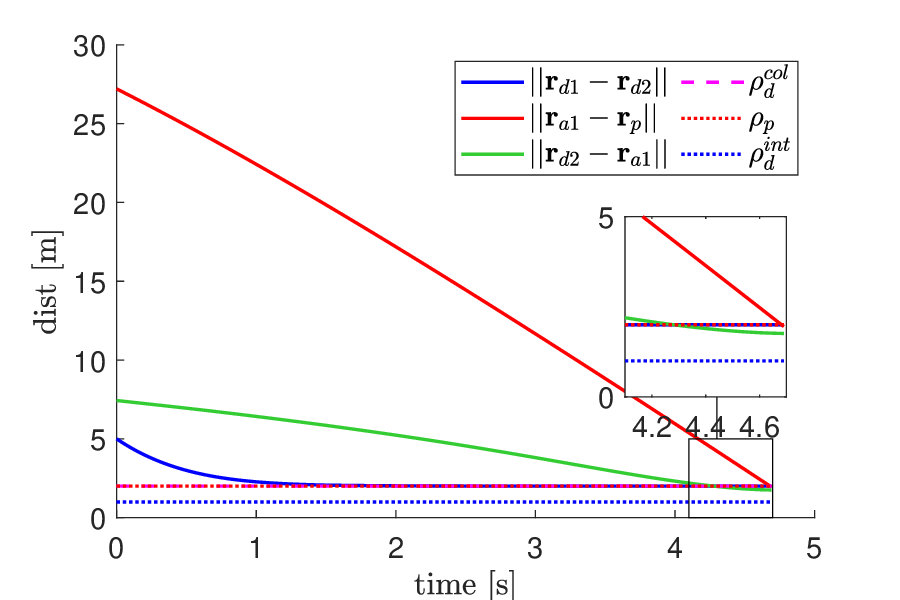}
	\caption{{Minimum distance between the defenders for the trajectories in Scenario 4}}
\label{fig:collision_avoid_compare_no_collision_def_fail_distDD}
\end{subfigure}
\caption{{\small{Minimum distance between the defenders under IDCAIS (CADAA \& ECBF-QCQP) (blue: defenders, red: attackers)}}}
\label{fig:collision_avoid_compare_dist}
\end{figure}

\subsection{{Attackers use time-sub-optimal control}} 
{Sometimes the attackers may not use time-optimal control for their motion to the protected area, for instance in an attempt to escape from the defenders. We ran two simulations (Scenario 5 and 6) in which attackers start playing time-suboptimal action after some time. Due to limited space, the results are provided in the video (link provided later). In Scenario 5, a {time-sub-optimal} behavior of the attackers hampers their chances of hitting the protected area. The attacker $\calA_3$ could have succeeded in reaching the protected area had it played optimally. However, $\calA_3$'s attempt to escape from $\calD_3$ gives more time to $\calD_3$ to capture $\calA_3$, and $\calD_3$ indeed captures $\calA_3$ while all defenders stay safe with respect to each other.} 

{In {Scenario 6}, again due to the active inter-defender collision avoidance, the defender $\calD_3$ fails to intercept $\calA_3$ despite the fact that: 1) $\calA_3$ {uses time-sub-optimal control} in an attempt to escape from the defenders, and 2) had $\calA_3$ played time-optimally, it would have been captured by $\calD_3$. This shows that while having an active inter-defender collision is beneficial for the safety of the defenders, it also compromises the interception performance. }

{A video of the above discussed simulations can be found at \href{https://tinyurl.com/yduuruff}{https://tinyurl.com/yduuruff}.}

\subsection{{Performance of CADAA}}
	\begin{figure}[ht]
   		\centering
		\includegraphics[width=.85\linewidth,trim={0.3cm 0cm 0.3cm 0.2cm},clip]{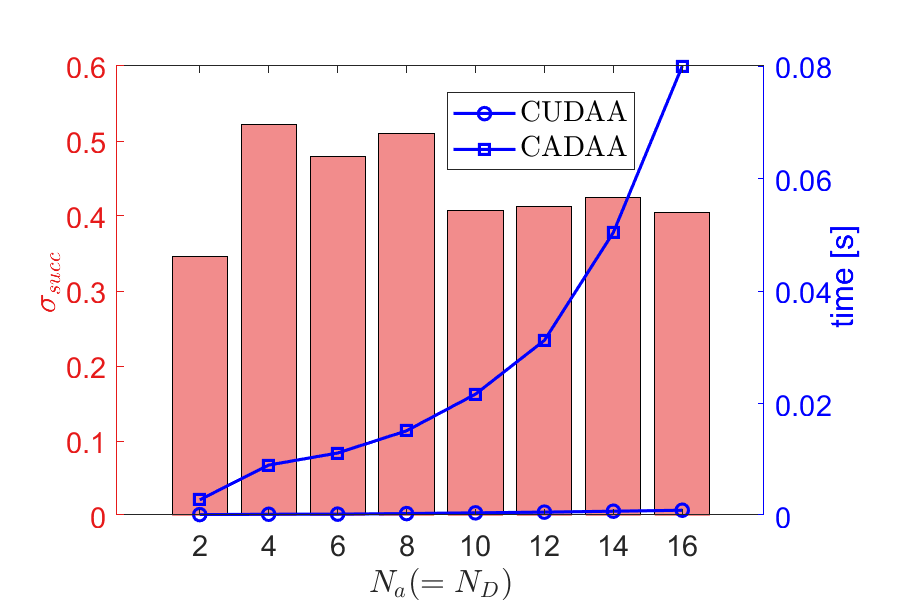}
		\caption{{Computational time of the proposed CADAA (MIQP) vs CUDAA (Hungarian algorithm \cite{kuhn1955hungarian}) and corresponding success rates (for each given number of attackers, 500 random initial configurations are chosen to compute the corresponding $\sigma_{succ}$ value)}}
		\label{fig:performance_compare_CADAA_vs_CUDAA}
	\end{figure}
	{ In this section we provide a numerical evaluation of the computational cost and the effectiveness of the proposed CADAA against the baseline CUDAA (which is usually solved in the literature using Hungarian algorithm \cite{kuhn1955hungarian}). It is numerically intractable to evaluate all possible scenarios even for 2 defenders vs 2 attackers case. So, we solve CADAA and CUDAA problems for random initial configurations of the players for varying number of attackers on a computer with 16 core Intel-i7 processor and 64 GB RAM using MATLAB. In \vishnu{Fig.~}\ref{fig:performance_compare_CADAA_vs_CUDAA}, we show the computation times to solve CADAA and CUDAA for varying number of attackers (also equals number of defenders). Each data point in \vishnu{Fig.~}\ref{fig:performance_compare_CADAA_vs_CUDAA} is obtained by solving the CADAA and CUDAA for 500 random initial configurations of the players for a given number of attackers. In \vishnu{Fig.~}\ref{fig:performance_compare_CADAA_vs_CUDAA}, for a given number of attackers, we also show using bar chart the success rate of the CADAA against the CUDAA defined as $$\sigma_{succ} = \frac{\text{ \# cases CADAA is able to avoid collisions}}{\text{\# cases CUDAA results into defender collisions}}.$$ }
	
	{
	As shown in \vishnu{Fig.~}\ref{fig:performance_compare_CADAA_vs_CUDAA}, the computation cost of the CADAA because of the underlying MIQP grows exponentially as the number of attackers grow while that of the CUDAA has polynomial dependence on the number of attackers due to the underlying linear assignment problem. While it is evident that the computation cost of the proposed CADAA is significantly higher than the baseline CUDAA approach that is prevalent in the literature, on average CADAA helps avoid inter defender collisions in around $40\%$ cases just by taking into account the future possible collisions among the defenders. In summary, the proposed CADAA provides a inter-defender collision free defender-to-attacker assignment whenever feasible at the cost of additional computation time.}
	
	\subsection{Additional Simulations} \label{sec:additional_simulations}
	{We also demonstrate the proposed algorithm for case studies involving up to 16 defenders against 16 attackers. In the interest of space, simulations of these cases studies are provided in the video available at \href{https://tinyurl.com/yduuruff}{https://tinyurl.com/yduuruff}. The considered scenarios are discussed as follows.
	\bi
	\item Scenario 7: There are 8 attackers approaching the protected area and 8 defenders located around the protected area. The CUDAA results into an assignment of defenders to attackers that results into defenders $\calD_3$ and $\calD_7$ colliding with each other (see the video). On the other hand, IDCAIS results into defenders successfully capturing all the attackers before the attackers reach the protected while ensuring no collision occurred between the fellow defenders.
	\item Scenario 8: There are 16 attackers approaching the protected area and 16 defenders located around the protected area. The CUDAA results into an assignment of defenders to attackers that results into defenders $\calD_1$ and $\calD_6$ colliding with each other (see the video). On the other hand, IDCAIS results into defenders successfully capturing all the attackers before the attackers reach the protected while ensuring no collision occurred between the fellow defenders.
	\ei 
	}


\section{Conclusions and Future Work} \label{sec:conclusions}
In this paper, we proposed inter-defender collision-aware interception strategy for a team of defenders moving under damped double integrator models to intercept multiple attackers. We used time-optimal, collision-aware, defender-to-attacker assignment (CADAA) to optimally assign defenders to attackers, and exponential-CBF-based QCQP for active inter-defender collision avoidance. The effectiveness and limitations of the proposed strategy are demonstrated formally as well as through simulations. The proposed collision-aware assignment (CADAA) requires solving a Mixed-Integer Quadratic Program (MIQP), which is computationally challenging as the number of the involved players increases. 
\xinyi{One limitation is the difficulty in identifying initial configurations that guarantee collision-free defender assignments. As future work, we plan to develop scalable analytic or data-driven criteria to address this challenge.}
Another future work is that we plan to investigate computationally-efficient heuristics in orde to solve the MIQP presented in this paper, and therefore reduce the computation time of CADAA.
{We would also like to investigate how the assignment would be affected in the case of measurement and environmental uncertainty by performing a formal robustness analysis in future work.
}	
	
\appendix
\subsection{Velocity bounds}\label{append:velocity_bounds}
\begin{lemma}
Consider the system in Eq. \eqref{eq:dampedDIDyn},
where \(C_D > 0\) is a drag coefficient and the input \(\mathbf{u}_{\imath}(t) \in \mathbb{R}^2\) satisfies
\[
\|\mathbf{u}_{\imath}(t)\| \leq \bar{u} \quad \text{for all } t \geq 0.
\]
If the initial condition satisfies \(\|\mathbf{v}_{\imath 0}\| \leq \frac{\bar{u}}{C_D}\), then the solution \(\mathbf{v}_{\imath}(t)\) satisfies
\[
\|\mathbf{v}_{\imath}(t)\| \leq \frac{\bar{u}}{C_D} \quad \text{for all } t \geq 0.
\]
\end{lemma}
\begin{proof}
From Eq.~\eqref{eq:dampedDIDyn}, the dynamics of the velocity can be extracted as:
\[
\dot{\mathbf{v}}_{\imath}(t) = -C_D \mathbf{v}_{\imath}(t) + \mathbf{u}_{\imath}(t), \hspace{1mm} \mathbf{v}_{\imath} (0)= \mathbf{v}_{\imath0} \in \bR^2
\]
The solution of the above differential equation is given by
\[
\mathbf{v}_{\imath}(t) = e^{-C_D t} \mathbf{v}_{\imath 0} + \int_0^t e^{-C_D (t - \tau)} \mathbf{u}_{\imath}(\tau) \, d\tau.
\]
Taking the norm on both sides and applying the triangle inequality:
\[
\|\mathbf{v}_{\imath}(t)\| \leq e^{-C_D t} \|\mathbf{v}_{\imath 0}\| + \int_0^t e^{-C_D (t - \tau)} \|\mathbf{u}_{\imath}(\tau)\| \, d\tau.
\]
Using the bound \(\|\mathbf{u}_{\imath}(\tau)\| \leq \bar{u}\), we have:
\[
\|\mathbf{v}_{\imath}(t)\| \leq e^{-C_D t} \|\mathbf{v}_{\imath 0}\| + \bar{u} \int_0^t e^{-C_D (t - \tau)} \, d\tau.
\]
Evaluating the integral:
\[
\int_0^t e^{-C_D (t - \tau)} \, d\tau = \int_0^t e^{-C_D s} \, ds = \frac{1 - e^{-C_D t}}{C_D}.
\]
Substituting back:
\[
\|\mathbf{v}_{\imath}(t)\| \leq e^{-C_D t} \|\mathbf{v}_{\imath 0}\| + \frac{\bar{u}}{C_D} (1 - e^{-C_D t}).
\]
Now, using \(\|\mathbf{v}_{\imath 0}\| \leq \frac{\bar{u}}{C_D}\):
\[
\|\mathbf{v}_{\imath}(t)\| \leq \frac{\bar{u}}{C_D} \left( e^{-C_D t} + 1 - e^{-C_D t} \right) = \frac{\bar{u}}{C_D}.
\]
\end{proof}

\subsection{Existence and uniqueness of solution to Eq.~\eqref{eq:optimal_control_soln}}\label{append:existence_and_uniqueness}
\begin{lemma} Consider an agent with linear drag dynamics \eqref{eq:dampedDIDyn}, starting from position $\mathbf{r}_{\imath 0} = [x_{\imath 0}, y_{\imath 0}]^\top$ and velocity $\mathbf{v}_{\imath 0} = [v_{x_{\imath 0}}, v_{y_{\imath 0}}]^\top$, subject to a control of constant magnitude $\bar{u}$ and direction $\theta_{\imath}^*$ with the terminal condition, at some $t_f>0$, on the boundary of the circle of radius $\rho_{\dag}$ centered at $\mathbf{r}_{\dag} = [x_{\dag}, y_{\dag}]^\top$ defined as $\mathcal{B}_{\rho_{\dag}}(\mathbf{r}_{\dag}) = \{\mathbf{r} \in \bR^2| \norm{\mathbf{r}-\mathbf{r}_{\dag}}<\rho_{\dag} \}$. Then, the system of equations \eqref{eq:optimal_control_soln} has a unique solution \( (t_f, \theta_{\imath}^*) \in \bR_+ \times [0, 2\pi) \), provided \( \mathbf{r}_{\imath 0} \notin \mathcal{B}_{\rho_{\dag}}(\mathbf{r}_{\dag})\) (i.e., the initial state is not inside the terminal circle).
	\end{lemma}
\begin{proof} The system can be written in vector form as:
	\[
	\mathbf{r}_{\dag} - \rho_{\dag} \mathbf{e}(\theta_{\imath}^*) = \mathbf{r}_{\imath 0} + E_1(t_f) \mathbf{v}_{\imath 0} + E_2(t_f) \bar{u} \mathbf{e}(\theta_{\imath}^*),
	\]
	where $\mathbf{e}(\theta) := [\cos\theta, \sin\theta]^\top$. Rearranging:
	\[
	\mathbf{r}_{\dag} - \mathbf{r}_{\imath 0} - E_1(t_f) \mathbf{v}_{\imath 0} = \left( E_2(t_f) + \rho_{\dag} \right) \mathbf{e}(\theta_{\imath}^*).
	\]
	Taking norms on both sides:
	\[
	\left\| \mathbf{r}_{\dag} - \mathbf{r}_{\imath 0} - E_1(t_f) \mathbf{v}_{\imath 0} \right\| = E_2(t_f) + \rho_{\dag}.
	\]
	Define the scalar function:
	\[
	D(t_f) := \left\| \mathbf{r}_{\dag} - \mathbf{r}_{\imath 0} - E_1(t_f) \mathbf{v}_{\imath 0} \right\| - \left( E_2(t_f) + \rho_{\dag} \right).
	\]
	As $t_f \to 0^+$:
	\[
	D(t_f) \to \| \mathbf{r}_{\dag} - \mathbf{r}_{\imath 0} \| - \rho_{\dag} > 0,
	\]
	since $\mathbf{r}_{\imath 0}$ lies outside the terminal circle.
	
	As $t_f \to \infty$, we have $E_1(t_f) \to 1/C_D$, $E_2(t_f) \to \infty$, so:
	\[
	D(t_f) \to \text{const} - \infty \to -\infty.
	\]
	By the Intermediate Value Theorem, there exists $t_f^* > 0$ such that $D(t_f^*) = 0$. This implies the vector on the left has the same norm as the right, so we can recover the heading:
	\[
	\theta_{\imath}^* = \arg\left( \mathbf{r}_{\dag} - \mathbf{r}_{\imath 0} - E_1(t_f^*) \mathbf{v}_{\imath 0} \right).
	\]
	Hence, a solution $(t_f, \theta_{\imath}^*)$ exists.\\

	\noindent \textbf{Uniqueness:} We now prove \( D(t_f) \) has a unique root by showing it is unimodal and strictly decreasing after a maximum.
		Let
		\[
		\mathbf{R}(t_f) := \mathbf{r}_{\dag} - \mathbf{r}_{\imath 0} - E_1(t_f) \mathbf{v}_{\imath 0},
		\]
		so
		\[
		D(t_f) = \|\mathbf{R}(t_f)\| - E_2(t_f) - \rho_{\dag}.
		\]
		Taking the derivative:
		\[
		\frac{d}{dt_f} D(t_f) = -\frac{d E_1}{dt_f} \cdot \frac{\mathbf{R}(t_f)^\top \mathbf{v}_{\imath 0}}{\|\mathbf{R}(t_f)\|} - \frac{d E_2}{dt_f},
		\]
		with \( \frac{d E_1}{dt_f} = e^{-C_D t_f} \), \( \frac{d E_2}{dt_f} = \frac{1 - e^{-C_D t_f}}{C_D} \). Therefore:
		\[
		\frac{d}{dt_f} D(t_f) = - e^{-C_D t_f} \cdot \frac{\mathbf{R}(t_f)^\top \mathbf{v}_{\imath 0}}{\|\mathbf{R}(t_f)\|} - \frac{1 - e^{-C_D t_f}}{C_D}.
		\]
		
		Let us bound the first term:
		\[
		\left| \frac{\mathbf{R}(t_f)^\top \mathbf{v}_{\imath 0}}{\|\mathbf{R}(t_f)\|} \right| \leq \|\mathbf{v}_{\imath 0}\|,
		\quad \text{(by Cauchy-Schwarz)}.
		\]
		Thus:
		\[
		\frac{d}{dt_f} D(t_f) \leq e^{-C_D t_f} \|\mathbf{v}_{\imath 0}\| - \frac{1 - e^{-C_D t_f}}{C_D}.
		\]
		This function is positive for small \( t_f \), but strictly negative for large \( t_f \), and has a unique maximum. Therefore, \( D(t_f) \) is unimodal and crosses zero at most once. Hence, the solution $(t_f, \theta_{\imath}^{*})$ to the system of equations in \eqref{eq:optimal_control_soln} is unique.

	\end{proof}

\subsection{Finite direction changes of distance evolution}\label{append:finite_direction_changes_of_distance}
\begin{lemma}
Consider two agents $\imath$ and $\imath'$ governed by the linear drag dynamics:
\begin{align}
    \dot{\mathbf{r}}_{\imath}(t) &= \mathbf{v}_{\imath}(t), \quad
    \dot{\mathbf{v}}_{\imath}(t) = \mathbf{u}_{\imath} - C_D \mathbf{v}_{\imath}(t), \\
    \dot{\mathbf{r}}_{\imath'}(t) &= \mathbf{v}_{\imath'}(t), \quad
    \dot{\mathbf{v}}_{\imath'}(t) = \mathbf{u}_{\imath'} - C_D \mathbf{v}_{\imath'}(t),
\end{align}
with $C_D > 0$ and constant control inputs $\mathbf{u}_{\imath}, \mathbf{u}_{\imath'} \in \mathbb{R}^2$. Let $\mathbf{r}_{\Delta}(t) = \mathbf{r}_{\imath}(t) - \mathbf{r}_{\imath'}(t)$ denote the relative position between the two agents, and define the inter-agent distance $d(t) = \|\mathbf{r}_{\Delta}(t)\|$.

Then, the number of turning points (i.e., zeros of $\dot{d}(t)$) in $d(t)$ is at most four on the interval $[0, \infty)$.
\end{lemma}

\begin{proof}
For each agent, under constant control, the velocity is given as:
\begin{equation}
    \mathbf{v}_{\imath}(t) = \mathbf{v}_{\infty,\imath} + (\mathbf{v}_{\imath0} - \mathbf{v}_{\infty,\imath}) e^{-C_D t}, \quad \text{where} \ \mathbf{v}_{\infty,\imath} = \frac{\mathbf{u}_{\imath}}{C_D}.
\end{equation}
The position is obtained by integrating the velocity:
\begin{equation}
    \mathbf{r}_{\imath}(t) = \mathbf{r}_{\imath0} + \mathbf{v}_{\infty,\imath} t + \frac{1}{C_D} (\mathbf{v}_{\imath0} - \mathbf{v}_{\infty,\imath}) (1 - e^{-C_D t}).
\end{equation}
Define the relative position $\mathbf{r}_{\Delta}(t) = \mathbf{r}_{\imath}(t) - \mathbf{r}_{\imath'}(t)$. Then,
\begin{align}
    \mathbf{r}_{\Delta}(t) &= \mathbf{a} + \mathbf{b} t + \mathbf{c}(1 - e^{-C_D t}), \\
    \dot{\mathbf{r}}_{\Delta}(t) &= \mathbf{b} + \mathbf{c} C_D e^{-C_D t},
\end{align}
where
\[
\mathbf{a} = \mathbf{r}_{\imath0} - \mathbf{r}_{\imath'0}, \quad
\mathbf{b} = \mathbf{v}_{\infty,\imath} - \mathbf{v}_{\infty,\imath'}, \quad
\]
\[
\mathbf{c} = \frac{1}{C_D} \left[(\mathbf{v}_{\imath0} - \mathbf{v}_{\infty,\imath}) - (\mathbf{v}_{\imath'0} - \mathbf{v}_{\infty,\imath'})\right].
\]

The time derivative of the distance function is:
\begin{equation}
    \dot{d}(t) = \frac{\mathbf{r}_{\Delta}(t)^\top \dot{\mathbf{r}}_{\Delta}(t)}{\|\mathbf{r}_{\Delta}(t)\|}.
\end{equation}
Let $f(t) = \mathbf{r}_{\Delta}(t)^\top \dot{\mathbf{r}}_{\Delta}(t)$, which has the same sign as $\dot{d}(t)$ (assuming $\|\mathbf{r}_\Delta(t)\| \ne 0$). Then,
\begin{equation}
    f(t) = (\mathbf{a} + \mathbf{b} t + \mathbf{c} (1 - e^{-C_D t}))^\top (\mathbf{b} + \mathbf{c} C_D e^{-C_D t}).
\end{equation}

Expanding this inner product gives a function $f(t)$ composed of the following basis functions:
\[
1, \quad t, \quad e^{-C_D t}, \quad t e^{-C_D t}, \quad e^{-2C_D t}.
\]
These five functions are linearly independent on $[0, \infty)$. Therefore, $f(t)$ belongs to the finite-dimensional function space spanned by these five basis functions.

Next, we consider their Wronskian matrix:
\[
W_4(t) =
\begin{bmatrix}
1 & t & E & tE & E^2 \\
0 & 1 & -C_D E & E(1 - C_D t) & -2C_D E^2 \\
0 & 0 & C_D^2 E & E(C_D^2 t - 2 C_D) & 4 C_D^2 E^2 \\
0 & 0 & -C_D^3 E & E(3 C_D^2 - C_D^3 t) & -8 C_D^3 E^2 \\
0 & 0 & C_D^4 E & E(C_D^4 t - 4 C_D^3) & 16 C_D^4 E^2
\end{bmatrix}.
\]
where $E = e^{-C_D t}$.
Evaluating the determinant of $W_4(t)$ yields:
\[
\det (W_4(t)) = 4 C_D^8 e^{-4 C_D t},
\]
which is strictly positive for all \( t \geq 0 \) and \( C_D > 0 \). Similarly, for other Wronskian matrices $W_k$ (formed by removing all columns $j$ and rows $j$, for $j>k+1$, from $W_4(t)$) for all $k \in \{3,2,1,0\}$, the determinants are:
\[
W_3 = C_D^4 e^{-2C_Dt}, \; W_2 = C_D^2 e^{-C_Dt},\; W_1=1, \;W_0=1
\]
Therefore, the Wronskian $W_k$ is non-zero for all $k\in \{4,3,2,1,0\}$, and the set of functions forms a Chebyshev system on \( [0, \infty) \).

Consequently, from the theory of Chebyshev systems based on Theorem 70 in \cite{meinardus2012approximation} any nontrivial linear combination of these five functions, such as \( f(t) \), can have at most four distinct real zeros on \( [0, \infty) \). Each such zero corresponds to a turning point in the inter-agent distance \( d(t) \), $\mathbf{r}_\Delta(t) \ne 0$.



Therefore, the number of turning points of $d(t)$ is at most 4.
\end{proof}

    \bibliographystyle{IEEEtran}
    \bibliography{refs}

\begin{thebibliography}{10}
\providecommand{\url}[1]{#1}
\csname url@samestyle\endcsname
\providecommand{\newblock}{\relax}
\providecommand{\bibinfo}[2]{#2}
\providecommand{\BIBentrySTDinterwordspacing}{\spaceskip=0pt\relax}
\providecommand{\BIBentryALTinterwordstretchfactor}{4}
\providecommand{\BIBentryALTinterwordspacing}{\spaceskip=\fontdimen2\font plus
\BIBentryALTinterwordstretchfactor\fontdimen3\font minus
  \fontdimen4\font\relax}
\providecommand{\BIBforeignlanguage}[2]{{%
\expandafter\ifx\csname l@#1\endcsname\relax
\typeout{** WARNING: IEEEtran.bst: No hyphenation pattern has been}%
\typeout{** loaded for the language `#1'. Using the pattern for}%
\typeout{** the default language instead.}%
\else
\language=\csname l@#1\endcsname
\fi
#2}}
\providecommand{\BIBdecl}{\relax}
\BIBdecl

\bibitem{isaacs1999differential}
R.~Isaacs, \emph{Differential games: a mathematical theory with applications to
  warfare and pursuit, control and optimization}.\hskip 1em plus 0.5em minus
  0.4em\relax Courier Corporation, 1999.

\bibitem{huang2011differential}
H.~Huang, J.~Ding, W.~Zhang, and C.~J. Tomlin, ``A differential game approach
  to planning in adversarial scenarios: A case study on capture-the-flag,'' in
  \emph{IEEE International Conference on Robotics and Automation}, 2011, pp.
  1451--1456.

\bibitem{pierson2016intercepting}
A.~Pierson, Z.~Wang, and M.~Schwager, ``Intercepting rogue robots: An algorithm
  for capturing multiple evaders with multiple pursuers,'' \emph{IEEE Robotics
  and Automation Letters}, vol.~2, no.~2, pp. 530--537, 2016.

\bibitem{chen2017multiplayer}
M.~Chen, Z.~Zhou, and C.~J. Tomlin, ``Multiplayer reach-avoid games via
  pairwise outcomes,'' \emph{IEEE Transactions on Automatic Control}, vol.~62,
  no.~3, pp. 1451--1457, 2017.

\bibitem{shishika2020cooperative}
D.~Shishika, J.~Paulos, and V.~Kumar, ``Cooperative team strategies for
  multi-player perimeter-defense games,'' \emph{IEEE Robotics and Automation
  Letters}, vol.~5, no.~2, pp. 2738--2745, 2020.

\bibitem{coon2017control}
M.~Coon and D.~Panagou, ``Control strategies for multiplayer
  target-attacker-defender differential games with double integrator
  dynamics,'' in \emph{Decision and Control (CDC), 2017 IEEE 56th Annual
  Conference on}.\hskip 1em plus 0.5em minus 0.4em\relax IEEE, 2017, pp.
  1496--1502.

\bibitem{zheng2020time}
Y.~Zheng, X.~Shao, Z.~Chen, and W.~Zhao, ``Time-optimal guidance to intercept
  moving targets by dubins vehicles,'' \emph{arXiv preprint arXiv:2012.11855},
  2020.

\bibitem{weintraub2020introduction}
I.~E. Weintraub, M.~Pachter, and E.~Garcia, ``An introduction to
  pursuit-evasion differential games,'' in \emph{2020 American Control
  Conference (ACC)}.\hskip 1em plus 0.5em minus 0.4em\relax IEEE, 2020, pp.
  1049--1066.

\bibitem{ho1965differential}
Y.~Ho, A.~Bryson, and S.~Baron, ``Differential games and optimal
  pursuit-evasion strategies,'' \emph{IEEE Transactions on Automatic Control},
  vol.~10, no.~4, pp. 385--389, 1965.

\bibitem{oyler2016pursuit}
D.~W. Oyler, P.~T. Kabamba, and A.~R. Girard, ``Pursuit--evasion games in the
  presence of obstacles,'' \emph{Automatica}, vol.~65, pp. 1--11, 2016.

\bibitem{pan2012pursuit}
S.~Pan, H.~Huang, J.~Ding, W.~Zhang, C.~J. Tomlin \emph{et~al.}, ``Pursuit,
  evasion and defense in the plane,'' in \emph{American Control Conference},
  2012, pp. 4167--4173.

\bibitem{bardi1999numerical}
M.~Bardi, M.~Falcone, and P.~Soravia, ``Numerical methods for pursuit-evasion
  games via viscosity solutions,'' in \emph{Stochastic and differential
  games}.\hskip 1em plus 0.5em minus 0.4em\relax Springer, 1999, pp. 105--175.

\bibitem{mohanan2020target}
J.~Mohanan, N.~Kothuri, and B.~Bhikkaji, ``The target guarding problem: A real
  time solution for noise corrupted measurements,'' \emph{European Journal of
  Control}, vol.~54, pp. 111--118, 2020.

\bibitem{hu2023multi}
H.~Hu, M.~Bui, and M.~Chen, ``Multi-agent reach-avoid games: Two attackers
  versus one defender and mixed integer programming,'' in \emph{2023 62nd IEEE
  Conference on Decision and Control (CDC)}.\hskip 1em plus 0.5em minus
  0.4em\relax IEEE, 2023, pp. 7227--7233.

\bibitem{landry2018reach}
B.~Landry, M.~Chen, S.~Hemley, and M.~Pavone, ``Reach-avoid problems via
  sum-or-squares optimization and dynamic programming,'' in \emph{IEEE/RSJ
  International Conference on Intelligent Robots and Systems}, 2018, pp.
  4325--4332.

\bibitem{lorenzetti2018reach}
J.~Lorenzetti, M.~Chen, B.~Landry, and M.~Pavone, ``Reach-avoid games via
  mixed-integer second-order cone programming,'' in \emph{IEEE Conference on
  Decision and Control}, 2018, pp. 4409--4416.

\bibitem{lee2016model}
S.~H. Lee, ``A model predictive control approach to a class of multiplayer
  minmax differential games,'' Ph.D. dissertation, University of Illinois at
  Urbana-Champaign, 2016.

\bibitem{guerrero2021area}
L.~Guerrero-Bonilla, M.~Egerstedt, and D.~V. Dimarogonas, ``Area defense and
  surveillance on rectangular regions using control barrier functions,'' in
  \emph{2021 IEEE/RSJ International Conference on Intelligent Robots and
  Systems (IROS)}.\hskip 1em plus 0.5em minus 0.4em\relax IEEE, 2021, pp.
  8166--8172.

\bibitem{yan2021cooperative}
R.~Yan, Z.~Shi, and Y.~Zhong, ``Cooperative strategies for
  two-evader-one-pursuer reach-avoid differential games,'' \emph{International
  Journal of Systems Science}, vol.~52, no.~9, pp. 1894--1912, 2021.

\bibitem{yan2019task}
------, ``Task assignment for multiplayer reach--avoid games in convex domains
  via analytical barriers,'' \emph{IEEE Transactions on Robotics}, vol.~36,
  no.~1, pp. 107--124, 2019.

\bibitem{yan2022matching}
R.~Yan, X.~Duan, Z.~Shi, Y.~Zhong, and F.~Bullo, ``Matching-based capture
  strategies for 3d heterogeneous multiplayer reach-avoid differential games,''
  \emph{Automatica}, vol. 140, p. 110207, 2022.

\bibitem{yan2024pursuit}
R.~Yan, S.~Mi, X.~Duan, J.~Chen, and X.~Ji, ``Pursuit winning strategies for
  reach-avoid games with polygonal obstacles,'' \emph{IEEE Transactions on
  Automatic Control}, 2024.

\bibitem{li2022intelligent}
Y.~Li, J.~He, C.~Chen, and X.~Guan, ``Intelligent physical attack against
  mobile robots with obstacle-avoidance,'' \emph{IEEE Transactions on
  Robotics}, vol.~39, no.~1, pp. 253--272, 2022.

\bibitem{bakolas2014optimal}
E.~Bakolas, ``Optimal guidance of the isotropic rocket in the presence of
  wind,'' \emph{Journal of Optimization Theory and Applications}, vol. 162,
  no.~3, pp. 954--974, 2014.

\bibitem{nguyen2016exponential}
Q.~Nguyen and K.~Sreenath, ``Exponential control barrier functions for
  enforcing high relative-degree safety-critical constraints,'' in
  \emph{American Control Conference}, 2016, pp. 322--328.

\bibitem{wang2017safe}
L.~Wang, A.~D. Ames, and M.~Egerstedt, ``Safe certificate-based maneuvers for
  teams of quadrotors using differential flatness,'' in \emph{2017 IEEE
  International Conference on Robotics and Automation (ICRA)}.\hskip 1em plus
  0.5em minus 0.4em\relax IEEE, 2017, pp. 3293--3298.

\bibitem{huang2019collision}
S.~Huang, R.~S.~H. Teo, and K.~K. Tan, ``Collision avoidance of multi unmanned
  aerial vehicles: A review,'' \emph{Annual Reviews in Control}, vol.~48, pp.
  147--164, 2019.

\bibitem{kirk2004optimal}
D.~E. Kirk, \emph{Optimal control theory: an introduction}.\hskip 1em plus
  0.5em minus 0.4em\relax Courier Corporation, 2004.

\bibitem{feng1986acceleration}
D.~Feng and B.~Krogh, ``Acceleration-constrained time-optimal control in n
  dimensions,'' \emph{IEEE transactions on automatic control}, vol.~31, no.~10,
  pp. 955--958, 1986.

\bibitem{gunantara2018review}
N.~Gunantara, ``A review of multi-objective optimization: Methods and its
  applications,'' \emph{Cogent Engineering}, vol.~5, no.~1, p. 1502242, 2018.

\bibitem{gurobi}
\BIBentryALTinterwordspacing
L.~Gurobi~Optimization, ``Gurobi optimizer reference manual,'' 2018. [Online].
  Available: \url{http://www.gurobi.com}
\BIBentrySTDinterwordspacing

\bibitem{ames2016control}
A.~D. Ames, X.~Xu, J.~W. Grizzle, and P.~Tabuada, ``Control barrier function
  based quadratic programs for safety critical systems,'' \emph{IEEE
  Transactions on Automatic Control}, vol.~62, no.~8, pp. 3861--3876, 2016.

\bibitem{xiao2019control}
W.~Xiao and C.~Belta, ``Control barrier functions for systems with high
  relative degree,'' in \emph{2019 IEEE 58th Conference on Decision and Control
  (CDC)}.\hskip 1em plus 0.5em minus 0.4em\relax IEEE, 2019, pp. 474--479.

\bibitem{bogachev2007measure}
V.~I. Bogachev and M.~A.~S. Ruas, \emph{Measure theory}.\hskip 1em plus 0.5em
  minus 0.4em\relax Springer, 2007.

\bibitem{fiacco1976sensitivity}
A.~V. Fiacco, ``Sensitivity analysis for nonlinear programming using penalty
  methods,'' \emph{Mathematical programming}, vol.~10, no.~1, pp. 287--311,
  1976.

\bibitem{blanchini2008set}
F.~Blanchini and S.~Miani, \emph{Set-theoretic methods in control}.\hskip 1em
  plus 0.5em minus 0.4em\relax Springer, 2008, vol.~78.

\bibitem{tam1999continuity}
N.~N. Tam, ``On continuity properties of the solution map in quadratic
  programming,'' \emph{Acta Mathematica Vietnamica}, vol.~24, pp. 47--61, 1999.

\bibitem{kuhn1955hungarian}
H.~W. Kuhn, ``The hungarian method for the assignment problem,'' \emph{Naval
  research logistics quarterly}, vol.~2, no. 1-2, pp. 83--97, 1955.

\bibitem{meinardus2012approximation}
G.~Meinardus, \emph{Approximation of functions: Theory and numerical
  methods}.\hskip 1em plus 0.5em minus 0.4em\relax Springer Science \& Business
  Media, 2012, vol.~13.

\end{thebibliography}

\vspace{-2cm}
\end{document}